\newenvironment{wholeindent}{\begin{adjustwidth}{1.2cm}{}}{\end{adjustwidth}}
\newcommand{\del}[1]{{\textcolor{yellow}
{\begin{tt}Suggested deletion: \end{tt}#1}}}
\newcommand{\jmh}[1]{{#1}}
\newcommand{\sam}[1]{\textcolor{black}{#1}}
\newcommand{\samf}[1]{\textcolor{black}{#1}}
\newcommand{\julf}[1]{\textcolor{black}{#1}}
\newcommand{\jul}[1]{\textcolor{black}{#1}}
\newcommand{\julien}[1]{{#1}}
\renewcommand{\P}{\mathbb{P}}
\newcommand{\R}{\mathbb{R}}
\renewcommand{\Re}{\mathbb{R}}
\newcommand{\N}{\mathbb{N}}
\newcommand{\1}{{\mathbf{1}}}
\newcommand{\E}{\mathbb{E}}
\newcommand{\NN}{{\mathcal{N}}}
\newcommand{\un}{{\bf{1}}}
\newcommand{\ie}{{\it i.e. }}
\newcommand{\lp}{e^{-\ratio_p}}
\newcommand{\ssum}{\displaystyle\sum}
\newcommand{\llim}{\displaystyle\lim}
\newcommand{\iSjnotS}{\begin{minipage}{0.6cm}{\begin{center}\tiny{$i\in S$ \\ $j\notin S$}\end{center}}\end{minipage}
}
\newcommand{\Diam}{\Delta}
\newcommand{\ratio}{\rho}
\newcommand{\eps}{\varepsilon}
\renewcommand{\underbar}[1]{{\underaccent{\bar}{#1}}}
\newcommand{\xM}{\bar{x}}
\newcommand{\xm}{\underbar{x}}
\newcommand{\dmin}{\delta_{\min}}
\newcommand{\dmax}{\delta_{\max}}
\newcommand{\sat}{\text{sat}}
\newcommand{\initiate}{\textit{engage}}
\newcommand{\initiates}{\textit{engages}}
\newcommand{\reply}{\textit{reciprocate}}
\newcommand{\replies}{\textit{reciprocates}}
\providecommand{\utij}{\underline t_{ij}}
\providecommand{\otij}{\overline t_{ij}}
\newcommand{\on}{active~}
\newcommand{\ivj}{\{i,j\}}
\newcommand{\bart}{\bar t}
\newcommand{\Ak}{A_k}
\newcommand{\Auk}{A1_k}
\newcommand{\Adk}{A2_k}
\newcommand{\Bz}{B_0}
\newcommand{\Buz}{B1_0}
\newcommand{\Bdz}{B2_0}
\newcommand{\Bk}{B_k}
\newcommand{\Buk}{B1_k}
\newcommand{\Bdk}{B2_k}
\newcommand{\Bkk}{B_{k+1}}
\newcommand{\Bukk}{B1_{k+1}}
\newcommand{\Bdkk}{B2_{k+1}}
\newtheorem{theorem}{Theorem}
\newtheorem{lemma}[theorem]{Lemma}
\newtheorem{proposition}[theorem]{Proposition}
\newtheorem{assumption}{Assumption}
\newtheorem{remark}{Remark}
\newtheorem{observation}{Observation}
\begin{document}

\pagestyle{empty}

%%%%%%%%%%%%%%%%%%%%%%%%%%%%%%%%%%%%%%%%%%%%%%%%%%%%%%%%%%%%%%%

\title{Continuous-Time Consensus under Non-Instantaneous Reciprocity}

\author{Samuel Martin and Julien M. Hendrickx% <-this % stops a space
\thanks{Julien Hendrickx is with the ICTEAM institute, Universit\'e catholique de Louvain, Louvain-la-Neuve, Belgium.   {\tt\small julien.hendrickx@uclouvain.be}}
\thanks{Samuel Martin is with Universit\'e de Lorraine and CNRS, CRAN, UMR 7039, 2 Avenue de la For\^et de Haye, 54518 Vand\oe uvre-l\`es-Nancy, France (part of the work was carried out when S. M. was with the ICTEAM institute).
        {\tt\small samuel.martin@univ-lorraine.fr } }%
 \thanks{This work is supported by the Belgian Network DYSCO (Dynamical Systems, Control, and Optimization), 
funded by the Interuniversity Attraction Poles Program, initiated by the Belgian Science Policy Office, and by the Concerted
Research Action (ARC) of the French Community of Belgium.
This work is also partly supported by the French Agence Nationale de la Recherche under ANR COMPACS - Computation Aware Control Systems, ANR-13-BS03-004 and by the CNRS via the interdisciplinary PEPS Project MADRES.}
}

%\author[Samuel Martin]{Samuel Martin}
%\address{Large Graphs and Networks Group \\
%Universit\'e Cahtolique de Louvain} \email{samuel.martinsa@gmail.com}

%\author[Julien Hendrickx]{Julien Hendrickx}
%\address{Large Graphs and Networks Group \\
%Universit\'e Cahtolique de Louvain} \email{julien.hendrickx@uclouvain.be}

%%%%%%%%%%%%%%%%%%%%%%%%%%%%%%%%%%%%%%%%%%%%%%%%%%%%%%%%%%%%%%%%

\maketitle

%%%%%%%%%%%%%%%%%%%%%%%%%%%%%%%%%%%%%%%%%%%%%%%%%%%%%%%%%%%%%%%%

\pagestyle{empty} %No headings for the first pages.
%\comjh{possible new abstract}\\

\begin{abstract}
We consider continuous-time consensus systems whose interactions satisfy a form of reciprocity that is not instantaneous, but happens over time. We show that these systems have certain desirable properties: They always converge independently of the specific interactions taking place and there exist simple conditions on the interactions for two agents to converge to the same value. This was until now only known for systems with instantaneous reciprocity.
These results are of particular relevance when analyzing systems where interactions are a priori unknown, being for example endogenously determined or random. We apply our results to an instance of such systems.
\end{abstract}

\begin{comment}
\begin{abstract}
We consider continuous-time consensus systems with some form of reciprocity in the interactions. \sam{This} reciprocity is not instantaneous, but happens over time.
%We extend to those systems certain desirable properties holding for systems with instantaneous reciprocity or weak symmetry: They always converge independently of the specific interactions taking places and there exist simple conditions on the interactions for two agents to converge to the same value.
\sam{We show that these systems satisfy desirable properties : They always converge independently of the specific interactions taking place and there exist simple conditions on the interactions for two agents to converge to the same value. Up to now, this was known only for systems systems with instantaneous reciprocity such as weak symmetry.}
These result are of particular relevance when analyzing systems where interactions are a priori unknown, being for example endogenously determined or random. \sam{We apply our results to an instance of such systems.}
\end{abstract}
\end{comment}

\begin{comment}
\begin{abstract}
\del{We consider continuous-time consensus systems with non-instantaneous reciprocity. Consensus systems subject to some weak forms of symmetry or reciprocity enjoy certain desirable properties: They always converge independently of the specific interactions taking places and there exist simple conditions on the interactions for two agents to converge to the same value. Such results allow analyzing systems where interactions are a priori unknown, being for example endogenously determined or random. 
%
\jul{In this work,} we extend these results to systems where the reciprocity is not instantaneous, but happens on average over time. }
\end{abstract}
%\jmh{\comjh{I've removed the part about generically converging to different values because (i) we do not extend that part of the results and (ii), it does not apply when the interactions are state-dependendent}}
\end{comment}

\section{Introduction}

We consider systems where $n$ agents each have a value $x_i\in \Re$ that evolves according to  
\begin{equation}\label{eq:def_sys_derivative}
\dot x_i = \sum_{j=1}^n a_{ij}(t) (x_j(t) - x_i(t)),
\end{equation}
where the $a_{ij}(t)\geq 0$ are non-negative functions of time. This means that the value of $x_i$ is continuously attracted by the values of the agents $j$ for which $a_{ij}(t)\neq 0$. 
These systems are called consensus systems because the interactions tend to reduce the disagreement between the interacting agents, and because any consensus state where all $x_i$ are equal is an equilibrium of the system. Analogous systems also exist in discrete time 
\cite{jadbabaie2003coordination,tsitsiklis1985problems,Moreau2005}.
% Blondel2005
% [+ref] removed to save some space DeGroot1974,
Consensus systems play a major role in decentralized control \cite{lin2004local},
%\comjh{add some ref?}, 
data fusion  \cite{boyd2006randomized,xiao2005scheme} \jmh{and} distributed optimization %\cite{duchi2012dual,zanella2011newton,nedic2010constrained},
\cite{duchi2012dual,nedic2010constrained}, 
but also when modeling some animal \cite{chazelle2009convergence,Vicsek1995}
 or social phenomena \cite{lorenz2007continuous, 
 castellano2009statistical}. 
 %boundedconf2002
 %ben2003bifurcations
%\jmh{\comjh{should we remove some references from pysics here?}}
% removed :blondel2009krause,lorenz2005stabilization

%, see [...] for survey. 

General convergence results for consensus systems involve connectivity assumptions that are hard to check for state-dependent interactions, and do not allow treating clustering phenomena. As detailed in the state of the art, more recent results guarantee convergence to one or several clusters under various assumptions on the symmetry or reciprocity of the interactions. All these reciprocity properties have however to be satisfied instantaneously and at every time. We extend them to treat systems where reciprocity is not instantaneous but happens on average over time. 
%\sam{This extension is not trivial. In fact, non-instantaneous reciprocity may fail to ensure convergence and lead to oscillatory behavior when the interaction weights are not properly bounded (see Section~\ref{sec:oscillatory-behavior} for an example). To prove our result we identify 
%%the system satisfies the non-instantaneous reciprocity Assumption, there exists 
%a sequence of states 
%%(taken at successive time instants) which can be identified 
%as the trajectory of a converging discrete time consensus system. To do so, we use a simple yet powerful technique which, to our knowledge, has not appeared previously in the literature : to characterize the dynamics of the discrete time system, we consider trivial artificial states being either $0$ or $1$ (see Section~\ref{sec:proof_general}).
%}.

This extension only holds under certain assumptions 
on the way reciprocity occurs.
Indeed, non-instantaneous reciprocity may fail to ensure convergence and lead to oscillatory behaviors when the interaction weights are not properly bounded, or when the time periods across which it occurs grow unbounded (see Section~\ref{sec:oscillatory-behavior} for an example). To prove our result we show that, for an appropriate sequence of times $t_k$, the states $x(t_k)$ can be seen as the trajectory of a certain discrete time consensus system. By analyzing the effect of each matrix of this system on some artificial initial conditions, we obtain bounds on their coefficients, and show that this system satisfies reciprocity conditions guaranteeing convergence.
% for an appropriate sequence $t_k$.} 
%identify 
%the system satisfies the non-instantaneous reciprocity Assumption, there exists 
%a sequence of states 
%(taken at successive time instants) which can be identified 
%as the trajectory of a converging discrete time consensus system.} 
%\del{To do so, we use a simple yet powerful technique which, to our knowledge, has not appeared previously in the literature : to characterize the dynamics of the discrete time system, we consider trivial artificial states being either $0$ or $1$ (see Section~\ref{sec:proof_general}).
%}.

The rest of the paper is organized as follows. The introduction includes a state of the art on consensus systems, a subsection pointing out the interest of non-instantaneous reciprocity and a summary of our contributions. Section~\ref{sec:pb-statement} formally introduces the system that we are considering and presents our main results. Examples illustrating our results and the necessity of an underlying assumption are then presented in Section \ref{sec:examples}. In Section \ref{sec:application}, we demonstrate the use of our results on a specific multi-agent applications. Sections  \ref{sec:proof_sketch} and \ref{sec:proof_pairwise} contain the proofs, and we finish by some conclusions in Section \ref{sec:ccl}.

% and a proof of our main result is sketched in Section \ref{sec:proof_sketch}.
%\del{presents the main result along with examples of systems illustrating when our result applies and why its underlying assumptions are necessary.}

%\comjh{I've changed the structure. We only has one introduction and one section.... Is it ok ? is it a problem that we have again "sketch of proof}

\subsection*{State of the art}

Consensus systems have been the object of many studies during the recent years, focusing particularly on finding conditions under which the system converges, possibly to a consensus state, \samf{and also} on the speed of convergence. Classical results typically guarantee convergence to consensus under some (repeated) connectivity conditions on the interactions,
see for example \cite{Moreau2005,xiao2008asynchronous ,jadbabaie2003coordination}
%\jmh{see the discussion in \cite{Hendrickx2011},} 
% Saber2004, Moreau2004stability
or \cite{Saber2007,Ren2005Survey} for surveys.

\sam{A variation of this repeated connectivity condition was also recently proposed in \cite{ManfrediAngeli2013} for certain classes of state-dependent interactions where the attraction magnitude should be non-decreasing with distance between agents' positions. It involves a graph defined by connecting a node to another when the dynamics of the former is sufficiently and repeatedly influenced by the latter and this being true \samf{for all positions of the two agents}.}
Different recent works have shown that stronger results hold when the interactions satisfy some form of reciprocity.
%something that had already been observed for discrete-time systems
% removed for CDC Lorenz:2003diplomathesis, 
%\cite{Blondel2005, li2004multi,lorenz2005stabilization,Moreau2005}.
%Different recent works 
%\comjh{+ref, including maybe Steve Morse cdc orlando} 
%have however shown that these difficulties could be overcome when the interactions satisfied some form of reciprocity, something that had already been observed for discrete-time systems
% removed for CDC Lorenz:2003diplomathesis, 
%\cite{Blondel2005, li2004multi,lorenz2005stabilization,Moreau2005}.
Hendrickx and Tsitsiklis have for example introduced the \emph{cut-balance} assumption on the interactions \cite{Hendrickx2011}, stating that there exists a $K$ such that for every subset $S$ of agents and time $t$, there holds
\begin{equation}\label{eq:cut-balance}
\sum_{i\in S,j\not\in S}a_{ij}(t) \leq K \sum_{i\in S,j\not\in S}a_{ji}(t).
\end{equation}
This assumption can actually be shown to mean that whenever an agent $i$  influences agent $j$ \jmh{indirectly}, agent $j$ also influences agent $i$ indirectly, with an intensity that is within a constant ratio of that of \jmh{$i$} on \jmh{$j$}. Particular cases of this assumptions include symmetric interactions $a_{ij} = a_{ji}$, bounded-ratio symmetry $a_{ij}\leq K a_{ji}$, or any average-preserving dynamics $\sum_{j} a_{ij} = \sum_{j} a_{ji}$ for every $i$. It was shown in \cite{Hendrickx2011} that systems satisfying the cut-balance assumption \eqref{eq:cut-balance} always converge, though not necessarily to consensus. 
Moreover, two agents' values converge to the same \jmh{limiting} value if they are connected by a path in the graph of \textit{persistent} interactions (also called \textit{unbounded} interactions in the literature), defined by connecting $i$ and $j$ if $\int_{0}^\infty a_{ij}(t)dt$ is infinite.
%Conversely, the values of agents that are not connected by such a path generically converge to different limit.  [should we let the conversely? probably not..]
These results allow analyzing the convergence properties of systems with relatively complex interactions; see the discussion in \cite{Hendrickx2011} for an example in opinion dynamics, or \cite{de2013self} for an application to system involving event-based ternary control of second order agents.

Martin and Girard have later shown~\cite{SamAntoine_Persistent_SICON2013} that in the case of convergence to a global consensus, the cut-balance assumption could be weakened, allowing for the \julien{interaction ratio bound} $K$ to slowly grow with the amount of interactions that have already taken place in the system. They also provide an estimate of the convergence speed in terms of the interactions having taken place.  
%\comjh{[Sam: can you check? + il y avait une autre formalization]} \comsam{autre formalisation ?}

Related convergence results were also proved for systems involving a \julien{continuum} of agents under a strict symmetry assumption in \cite{hendrickx2013symmetric}. %\comjh{peut être virer cette ref la pour la cdc} \comsam{comme tu veux}
\sam{An alternative reciprocity condition called \textit{arc-balance} was considered in \cite{Shi2013}; it requires \emph{all weights} $a_{ij}(t)$ to be within a constant ratio of each other, except those for which $\int_0^\infty a_{ij}(t)dt <\infty$. }
%\sam{
%In the case of \cite{Shi2013}, it is also assumed that the persistent weights all have to be of the same order of magnitude. This implies that all persistent arcs must be non-null simultaneously. Since these persistent weights are assumed to form a graph containing a spanning tree (\ie, there exists a root node from which all other node can be reached), it also prevents to have different time scales in different part of the network. These two limitations do not appear in our result. 
%}

Finally, we note that similar results of convergence under some reciprocity conditions have been obtained for discrete time consensus systems, see for example \cite{touri2014endogenous,touri2012backward,bolouki2013ergodicity, li2004multi,Moreau2005}.
%lorenz2005stabilization
However, none of these results allow for non-instantaneous reciprocity.
%  \jul{(to the exception of a generalization of the results based on arc-balance mentioned in \cite{Shi2013}).

%systemsSimilar 
%(and somewhat stronger \comsam{may be to explain ?}) 
%results were also obtained for discrete time consensus systems, see for example \cite{touri2010,touri2014endogenous,touri2012backward,bolouki2013ergodicity,Blondel2005, li2004multi,lorenz2005stabilization,Moreau2005}
 
% touri2011product removed for space reasons
%\comjh{check those ref+ did not find rabbat}
%[+ ref à proskurnikov]
% [ ...ref Malhame, rabbat, touri]. 
% , which present the additional difficulty that agents can almost entirely \quotes{forget} their previous values in one time-step.
%3 13 14 18 20 21 27
%Product of Random Stochastic Matrices%
%B. Touri and A. Nedi?, Product of Random Stochastic Matrices, to appear in IEEE Transactions on Automatic Control. 

\julien{
\subsection*{Non-instantaneous reciprocity}
}
\label{sec:non-instantaneous-reciprocity}
All the results taking advantage of reciprocity require the reciprocity condition to be satisfied instantaneously \jmh{at (almost) all times}. 
They would thus not apply to systems that are essentially reciprocal, but where the reciprocity may be delayed, or where it happens over time:
\julien{In systems relying on certain wired or wireless network protocols, agents may be unable to simultaneously send and receive information, resulting in loss of instantaneous reciprocity, even if the interactions are meant to be reciprocal.}
Non-instantaneous reciprocity also arises in a priori symmetric systems where the control of the agents is event-triggered or self-triggered. 
%Consider for example a system that is a priori symmetric, but where agents update their control actions in an event-triggered of self-triggered manner, and 
Indeed, suppose that at some time the conditions are such that agents $i$ and $j$ should interact. It is very likely that one agent will update its control action before the other, so that during a certain interval of time the actual interactions will not be symmetric. 

%\comjh{delete next sentence?}(This problem was avoided in \cite{de2013self} because the triggering rule was such that neighboring agents $i$ and $j$ would update their reciprocal interaction exactly at the same time, but this may be a restrictive setting).

Similar problems are present in systems prone to occasional failures, or unreliable communications, where the communication between two agents can temporarily be interrupted in one direction for a limited amount of time.

Issues with non-instantaneous reciprocity may also arise in swarming processes or any multi-agent control problem where sensors have a limited scope.
% due to the limited scope of sensors. 
Suppose indeed that the sensors are not omnidirectional, as it is for example the case for human or animal eyes. It is then generally impossible for an agent to observe all its neighbors at the same time. The same issue arises if the agent can only treat a limited  number of neighbors simultaneously.
A natural solution is then to observe a subset of the neighbors and to periodically modify the subset being observed. This can for example be achieved by continuously rotating the directions in which observations are made. 
%\comjh{as a radar does?}? 
In that case, even if the neighborhood relation is symmetrical, it is again highly likely that an agent $i$ will sometime observe an agent $j$ without that $j$ is observing $i$ at that particular moment, but that $j$ will observe $i$ later.
In all these situations, one could hope to take advantages of the essential reciprocity of the system design even if this reciprocity is not always instantaneously satisfied.

\vspace{0.2cm}

\subsection*{Contributions}

\jmh{We show in our main result (Theorem \ref{th:consensus-under-persistent-connectivity-and-upper-bound-on-cummulative-weights-and-reciprocity})} that the convergence of systems of the form \eqref{eq:def_sys_derivative} is still guaranteed if the system satisfies some form of non-instantaneous reciprocity, or reciprocity on average. More specifically, we assume that the cut-balance condition \eqref{eq:cut-balance} is satisfied \emph{on average} on a sequence of contiguous intervals. These intervals can have arbitrary lengths, but the amount of interaction taking place during each of them should be uniformly bounded. 
Under these assumptions, we show that the system always converges. Moreover, two agent values converge to the same limit if they are connected by a path in the graph of persistent interactions, defined by connecting two agents $i,j$  if $\int_{t=0}^\infty a_{ij}(t)dt$ is infinite.

We also particularize our general result to systems satisfying a form of pairwise reciprocity over bounded time intervals. This particularized result is more conservative,
%applies under stronger conditions
but its condition can often be easier to check.
%often easier to check. %than the general result
We illustrate it on an application.

\section{Problem Statement and Main Results}\label{sec:pb-statement}

%\jmh{\comjh{we do not really state a problem formally. Should we say system description? i'll go with your choice}}

%\julien{We consider the Caratheodory solutions to \eqref{eq:def_sys_derivative}, that are absolutely continuous functions $\Re^+\to \Re^n:t\to x(t)$ satisfying \eqref{eq:def_sys_derivative} at almost all time. Equivalently, Caratheorody solutions are solutions to the integral version of \eqref{eq:def_sys_derivative}}
%}
%\del{In this paper,} 
We study the integral version of the consensus system~(\ref{eq:def_sys_derivative}):
\begin{equation} \label{eq:sys-integral}
x_i(t) = x_i(0) + \int_0^t \ssum_{j=1}^{n}a_{ij}(s)(x_j(s)-x_i(s))ds,
\end{equation}
where for all \julien{$i,j \in \NN=\{1,\dots,n\}$}, the {\it interaction weight} $a_{ij}$ is a non-negative measurable function of time, summable on bounded intervals of $\R^+$. %\julien{(It will be seen that our result also apply to state dependent coefficients $a_{ij}(x,t)$)} 
There exists a unique function of time $x:\R^+ \rightarrow \R^n$ which satisfies for all $t\in \R^+$ the integral equation~(\ref{eq:sys-integral}), \julien{and it is locally absolutely continuous} (see Theorem~54 and Proposition~C.3.8 in~\cite[pages 473-482]{Sontag98}). This function is actually the Caratheodory solution to the differential equation \eqref{eq:def_sys_derivative} and can equivalently be defined as absolutely continuous function
%\comsam{there is only one continuous such function, right ?} \jmh{\comjh{Not exactly sure, there are these annoying Cantor devil's staircases things which can be very surprising. I suggest not insisting on this. Is it a problem to leave it in its present version?}} 
satisfying \eqref{eq:def_sys_derivative} at almost all times. We call it the {\it trajectory} of the system.

%------ see if needed + (does not seem to)

%We say that a trajectory {\it reaches a consensus} when $\lim_{t\rightarrow +\infty} x_i(t)$ exist and are the same, for all $i\in \NN$. The common limit is called the {\it consensus value}.
%\end{definition}
%
%We define the {\it group diameter} as  
%\begin{equation}\label{eq:diameter}
%\Delta_\NN(t) = \xM(t) - \xm(t),
%\end{equation}
%where $\xM(t) = \max_{i\in \NN} x_i(t)$ and $\xm(t) =\min_{j\in \NN} x_j(t)$.
%It can be easily shown that $\xM$ is non-increasing and that 
%$\xm$ is non-decreasing. Then, it is clear that the group diameter is non-increasing and that the trajectory reaches a consensus if and only if 
%$\llim_{t\rightarrow +\infty} \Delta_\NN(t) = 0.$

%\subsection{Main results}
Following the discussion in the Introduction,
%Section~\ref{sec:non-instantaneous-reciprocity}, 
we introduce a new 
%\del{reciprocity assumption which}\comjh{there were too many "reciprocity.." ;-)} 
\jmh{condition} generalizing Condition \ref{eq:cut-balance} by allowing for non-instantaneous reciprocity of interactions; we only require that the reciprocity occurs on the integral weights $\int a_{ij}(s)ds$ over some bounded time intervals.
\vspace{.2cm}
\begin{assumption}[Integral weight reciprocity]\label{as:reciprocity}
%\sam{or delayed or postponed or on average reciprocity?}
There exists a sequence $(t_p)_{p\in \N}$ of increasing times with 
$\lim_{p\rightarrow +\infty} t_p = +\infty$ and some uniform bound $K\ge 1$ such that, for all non-empty proper subsets $S$ of $\NN$, and for all $p \in \N$, \jmh{there holds}
\begin{equation}\label{eq:reciprocity}
\ssum_{i\in S, j\notin S}\int_{t_p}^{t_{p+1}} a_{ij}(t)dt \;\;\le \;\; K\ssum_{i\in S, j\notin S}\int_{t_p}^{t_{p+1}} a_{ji}(t)dt.
\end{equation}
\end{assumption}
\vspace{.2cm}

%\subsection{Relation to existing reciprocity assumption}

%\del{Assumption~\ref{as:reciprocity} generalizes most types of reciprocity found in the consensus literature. In particular, it generalizes the cut-balance assumption~\eqref{eq:cut-balance} developed in~\cite{Hendrickx2011} \jmh{and discussed in the Introduction}.}
%\comjh{I suggest deleting the previous sentence, as it is redundant with the discussion that we have added before assumption 3. I've therefore added the fact that assumption 1 generalized cut-balance just before assumption 1}
%\del{, { specific cases of which include (i) symmetric interactions $a_{ij}=a_{ji}$, (ii) bounded-ratio symmetry $a_{ij}\leq K a_{ji}$ for some $K>0$, or (iii) average-preserving dynamics $\sum_j a_{ij} = \sum_j a_{ji}$ for every $i$.}}\comjh{This was already mentionned in the introduction}.

\julien{We will see in a simple example in Section~\ref{sec:proof-prop-reciprocity-upper-bound-counter-example} that Assumption \ref{as:reciprocity} alone is not sufficient to guarantee the convergence of the system. We need to further assume that the integral of the interactions taking place in each interval $[t_p,t_{p+1}]$ is uniformly bounded.}

%\begin{comment}
%Several classes of consensus systems such as those described in section~\ref{sec:non-instantaneous-reciprocity} present convergent behavior. However, due to asynchronous behaviors, they do not satisfy the cut-balanced assumption nor any instantaneous assumption. Thus, we have to turn to the reciprocity assumption~\ref{as:reciprocity} to prove their convergence. In section~\ref{sec:convergent-toy-system-with-delayed-reciprocity}, we provide an explicit example of such systems.
%\end{comment}
%
%\begin{comment}
%However, when assuming delayed rather than instantaneous reciprocity, convergence is not always guaranteed : the system may instead oscillate indefinitely. To illustrate this fact, we provide a 3-agent example
%in section~\ref{sec:proof-prop-reciprocity-upper-bound-counter-example}.
%Thus, one more assumption has to be force convergence. The integral communication weights must be uniformly upper bounded over intervals where reciprocity occurs.
%\end{comment}
\vspace*{.2cm}
\begin{assumption}[Uniform upper bound on integral weights]\label{as:upper-bound-on-weights}
The sequence $(t_p)$ used in Assumption~\ref{as:reciprocity} is such that 
$$
\int_{t_p}^{t_{p+1}} a_{ij}(t)dt \le M,
$$
holds for all $i,j \in \NN$, $p \in \N$ and some constant $M$.
\end{assumption}
\vspace*{.2cm}

\julien{We  now state our main result, whose proof is presented in Section 
\ref{sec:proof_general}.
}

\vspace*{.2cm}

\begin{theorem}\label{th:consensus-under-persistent-connectivity-and-upper-bound-on-cummulative-weights-and-reciprocity}
Suppose that the interaction weights \jmh{of system \eqref{eq:sys-integral}} satisfy Assumptions~\ref{as:reciprocity} (\jmh{integral} reciprocity) and~\ref{as:upper-bound-on-weights} (upper bound on weight integral). Then, every trajectory $x$ of system \eqref{eq:sys-integral} converges.

Moreover, \julien{let $G=(\NN,E)$ be the graph of persistent weights defined by connecting $(j,i)$ if $\int_0^\infty a_{ij}(t)dt = +\infty$. Then, there is a directed path from $i$ to $j$ in $G$ if and only if there is a directed path from $j$ to $i$, and there holds in that case $\lim_{t\to \infty}x_i(t) = \lim_{t\to \infty}x_j(t)$. }
\end{theorem}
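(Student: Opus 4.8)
The plan is to sample the continuous system at the times $t_p$ of Assumption~\ref{as:reciprocity} and reduce it to a discrete-time consensus system whose transition matrices satisfy a cut-balance condition guaranteeing convergence. Writing the dynamics as $\dot x = -L(t)x$ with $L(t)$ the (time-varying) Laplacian of the weights $a_{ij}(t)$, the matrix $-L(t)$ is Metzler with zero row sums, so its state-transition matrix $B_p := \Phi(t_{p+1},t_p)$ is nonnegative and row-stochastic and satisfies $x(t_{p+1}) = B_p\,x(t_p)$. Assumption~\ref{as:upper-bound-on-weights} bounds the total interaction of each agent on $[t_p,t_{p+1}]$ by $(n-1)M$, which yields a uniform lower bound $(B_p)_{ii} \ge e^{-(n-1)M} =: \delta > 0$ on the diagonal entries. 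Thus $(x(t_p))_{p\in\N}$ is the trajectory of a discrete-time consensus system with row-stochastic matrices having uniformly positive diagonals.

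First I would establish the core technical lemma: that integral reciprocity (Assumption~\ref{as:reciprocity}) transfers to a discrete cut-balance condition on the $B_p$, i.e. there is a constant $K' = K'(K,M,n)$ with $\sum_{i\in S, j\notin S}(B_p)_{ij} \le K'\sum_{i\in S,j\notin S}(B_p)_{ji}$ for every proper nonempty $S\subseteq\NN$ and every $p$. This is where the \emph{artificial initial conditions} enter: running the system on $[t_p,t_{p+1}]$ from the indicator state $x(t_p) = \mathbf 1_{\NN\setminus S}$ gives $\sum_{i\in S}x_i(t_{p+1}) = \sum_{i\in S,j\notin S}(B_p)_{ij}$, so the flow across the cut can be read off as the amount of value that has entered $S$, and symmetrically for the reverse direction using $x(t_p)=\mathbf 1_S$. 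I would then bound this inflow above by $C_1(M,n)\sum_{i\in S,j\notin S}\int_{t_p}^{t_{p+1}}a_{ij}$ and the reverse outflow below by $C_2(M,n)\sum_{i\in S,j\notin S}\int_{t_p}^{t_{p+1}}a_{ji}$; combined with \eqref{eq:reciprocity} this gives the discrete cut-balance with $K' = C_1 K / C_2$.

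I expect this lemma to be the main obstacle. The difficulty is that $\tfrac{d}{dt}\sum_{i\in S}x_i$ is not governed solely by the boundary weights: it also contains the internal imbalance term $\sum_{i,j\in S}(a_{ij}-a_{ji})(x_j - x_i)$, so value entering $S$ through one boundary edge can be routed and amplified inside $S$. The upper bound $M$ of Assumption~\ref{as:upper-bound-on-weights} is precisely what caps this amplification: since every edge contributes integral at most $M$ and there are at most $n$ agents, the inflow can exceed the direct boundary integral only by a factor depending on $M$ and $n$, which is what keeps $C_1,C_2$ finite. The lower bound $C_2$ is the more delicate of the two, since it requires showing that a substantial boundary integral forces a substantial transfer before the driving gap is consumed; here again $M$ prevents the values from equalizing too quickly.

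With discrete cut-balance in hand, convergence of $(x(t_p))$ and the characterization of the clusters follow from the known convergence results for discrete-time cut-balanced systems with positive diagonals \cite{touri2012backward,bolouki2013ergodicity,Moreau2005}, after checking that an edge is persistent for the discrete system, $\sum_p (B_p)_{ij} = \infty$, exactly when $\int_0^\infty a_{ij}(t)\,dt = \infty$, which is again a consequence of the flow bounds above. Summing \eqref{eq:reciprocity} over all $p$ moreover yields the global cut-balance $\sum_{i\in S,j\notin S}\int_0^\infty a_{ij} \le K\sum_{i\in S,j\notin S}\int_0^\infty a_{ji}$, from which I would deduce the reachability symmetry by a closed-set argument: taking $S$ to be the set of agents reachable from $j$ in $G$, no persistent edge leaves $S$, so the global cut-balance forces no persistent edge to enter $S$ either; a path $i\to j$ would have to cross into $S$ through a persistent edge, so such a path can exist only if $i\in S$, i.e. $j$ reaches $i$. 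Finally I would upgrade convergence of the samples to convergence of the whole trajectory by bounding the within-interval oscillation $|x_i(t)-x_i(t_p)|$ on $[t_p,t_{p+1}]$ by $(n-1)M$ times the current diameter, and using that the diameter is non-increasing together with the summability of the cross-cluster weights (and the vanishing intra-cluster diameter) to show these oscillations vanish; agents in a common persistent strongly connected component then share the same limit, giving the equal-limit claim.
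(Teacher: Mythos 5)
Your proposal follows essentially the same route as the paper's proof: sampling at the times $t_p$ of Assumption~\ref{as:reciprocity}, reading off the cut flows of the resulting row-stochastic transition matrices via indicator (artificial) initial conditions, sandwiching those flows between constant multiples of $\sum_{i\in S,j\notin S}\int_{t_p}^{t_{p+1}}a_{ij}(t)dt$ so that integral reciprocity transfers to a discrete cut-balance condition, invoking a discrete-time cut-balance convergence theorem, matching the persistent graphs, and finally upgrading to convergence of the whole trajectory using the finiteness of the cross-cluster integrals. The two points needing care in execution are that the discrete-time theorem you appeal to must be one that does \emph{not} require a uniform positive lower bound on the nonzero off-diagonal entries (the paper uses Theorem~1 of \cite{touri2014endogenous} precisely because no such bound holds for the sampled matrices, so e.g.\ \cite{Moreau2005} would not do), and that the final upgrade step is cleanest via the near-monotonicity of $\max_{i\in S}x_i(t)$ and $\min_{i\in S}x_i(t)$ under finite outside influence, as a direct within-interval oscillation bound in terms of a ``vanishing intra-cluster diameter'' is mildly circular.
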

\vspace*{.2cm}

The second part of the theorem implies that there is a local consensus in each strongly connected component\footnote{\sam{Strongly connected components are defined as the classes of equivalence on the node set where node $i$ and $j$ belong to the same class if and only if $i$ and $j$ are connected to each other by at least a path from $i$ to $j$ and a path from $j$ to $i$.}} of the graph $G$ of persistent interactions. \sam{Notice that the second part of the theorem also implies that \samf{each strongly \julf{connected} component is fully disconnected from the others in graph $G$ :} no edge leaves one component to arrive at another. This is due to the reciprocity Assumption~\ref{as:reciprocity}.}
\vspace*{.2cm}

\sam{Assumption~\ref{as:reciprocity} generalizes most (instantaneous) reciprocity conditions available in the literature, including cut-balance, and is thus automatically satisfied by any system satisfying such conditions. It is moreover satisfied by classes of systems subject to some form of reciprocity that is delayed due for example to communication constraints. It applies for instance to systems where agents engage in interaction with a neighbor while the latter may be asleep or already busy %\samf{due to an engagement}\julf{\comjh{not sure. To recheck}} 
\samf{interacting with another agent}. We provide an example of such application in Section~\ref{sec:application}.}

%\comjh{a la reflexion, le paragraphe ci dessous me semble fort négatif par rapport à notre travail sans offrir de réponse vraiment convaincquante}

%\del{Checking whether a given (non-instantaneously reciprocal) system verifies the nontrivial integral condition in Assumption~\ref{as:reciprocity} may appear hard in general. One option is to show that it is implied by the specific reciprocal nature of the system, as done in Section~\ref{sec:application}. Another one is to derive sufficient conditions on the initial configuration which implies that the integral condition remains valid over time (see for instance~\cite{MartinGirardFazeliJadbabaie}).}
%\del{Another (apparent) difficulty comes from the fact that the reciprocity conditions and the time intervals over which it has to be satisfied are global. We therefore now introduce a new local assumption} 

%\comsam{Proposition de modification}
\sam{There are several options to check whether a given (non-instantaneously reciprocal) system verifies the integral condition in Assumption~\ref{as:reciprocity}. One option is to show that it is implied by the specific reciprocal nature of the system, as done in Section~\ref{sec:application}. Another one is to derive sufficient conditions on the initial configuration which implies that the integral condition remains valid over time (see for instance~\cite{MartinGirardFazeliJadbabaie}).}

\sam{The reciprocity conditions and the time intervals over which it has to be satisfied are global. We now introduce a new local assumption} 
%\del{Another reason is that the reciprocity periods are centralized. However, we now show that Assumption~\ref{as:reciprocity} can be satisfied in a local way. To do so, we introduce a new local assumption }
that we will show to \sam{imply}
%be more conservative than 
Assumptions \ref{as:reciprocity} and \ref{as:upper-bound-on-weights} when interactions are bounded. It requires that whenever an agent $j$ influences an agent $i$ at some time $t$, both agents should influence each other with a sufficient strength across a certain time interval around $t$.

\sam{
%\vspace{0.2 cm}
\begin{assumption}[Pairwise reciprocity]\label{as:pairwise_reciprocity}
There exists a constant $\eps>0$ such that for every unordered pair $\{i,j\}$ with $i,j\in\NN $ distinct, there exists a constant $T_{ij}>0$ such that for all $t\ge 0$, if $a_{ij}(t) >0$ or $a_{ji}(t)>0$, then there exists $\underline t_{ij}, \overline t_{ij}$ such that
\\
\noindent a) $\overline t_{ij} - \underline t_{ij}\leq T_{ij}$,
\\
\noindent b) $ t \in [\underline t_{ij}, \overline t_{ij}]$,
\\
\noindent c) $\int_{\underline t_{ij}}^{\overline t_{ij}}a_{ij}(t)dt  \geq \eps$ and $\int_{\underline t_{ij}}^{\overline t_{ij}}a_{ji}(t)dt  \geq \eps$.
\end{assumption}
}

\vspace{0.2cm}

\sam{Assumption~\ref{as:pairwise_reciprocity} provides a way of verifying non-instantaneous reciprocity entirely locally, by considering separately each pair of nodes. For instance, reciprocal weights of type $a_{ij}(t) = 1 + (-1)^{\lfloor t \rfloor}$ and $a_{ji}(t) =  1 + (-1)^{\lfloor \omega t + \gamma\rfloor} $ satisfy the pairwise non-instantaneous reciprocity for any constants $\omega>0, \gamma \in \R$, although one of the weights may be null while the other is not. The following Theorem is proved in Section \ref{sec:proof_pairwise}.}

%\sam{Assumption~\ref{as:pairwise_reciprocity} provides a decentralized way to check for non-instantaneous reciprocity. For instance, reciprocal weights of type $a_{ij}(t) = \sin(t)+1$ and $a_{ji}(t) = \sin(\omega t + \gamma)+1$ satisfy the pairwise non-instantaneous reciprocity for any constants $\omega>0, \gamma \in \R$, although one of the weights may be null while the other is not.} \jul{The following Theorem is proved in Section \ref{sec:proof_pairwise}.}

\vspace{0.2cm}

\begin{theorem}\label{th:pairwise_reciprocity}
Suppose that the interaction weights $a_{ij}(t)$ of system \eqref{eq:sys-integral} satisfy  Assumption \ref{as:pairwise_reciprocity} and are uniformly bounded \sam{above} by some constant $M'$. Then they satisfy Assumptions \ref{as:reciprocity} and \ref{as:upper-bound-on-weights}, and the conclusions of Theorem \ref{th:consensus-under-persistent-connectivity-and-upper-bound-on-cummulative-weights-and-reciprocity} hold. 
\end{theorem}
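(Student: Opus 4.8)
The plan is to exhibit a single sequence $(t_p)$ that simultaneously witnesses Assumptions~\ref{as:reciprocity} and~\ref{as:upper-bound-on-weights}, after which the conclusion follows verbatim from Theorem~\ref{th:consensus-under-persistent-connectivity-and-upper-bound-on-cummulative-weights-and-reciprocity}. Since there are only finitely many unordered pairs, I would first set $T := \max_{\{i,j\}} T_{ij} < \infty$, so that every reciprocity window furnished by Assumption~\ref{as:pairwise_reciprocity} has length at most $T$ and carries mass at least $\eps$ in \emph{both} directions, with $\eps$ already uniform. I will build the sequence so that the consecutive gaps $t_{p+1}-t_p$ stay bounded by some $L$. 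Granting bounded gaps, Assumption~\ref{as:upper-bound-on-weights} is immediate: since $a_{ij}\le M'$ everywhere, $\int_{t_p}^{t_{p+1}}a_{ij}(t)\,dt \le M'(t_{p+1}-t_p)\le M'L =: M$.

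The heart of the argument is Assumption~\ref{as:reciprocity}. I would first reduce the cut-balance inequality~\eqref{eq:reciprocity} to an \emph{ordered-pairwise} statement: if for every ordered pair $(i,j)$ and every $p$ one has $\int_{t_p}^{t_{p+1}}a_{ij}\le K\int_{t_p}^{t_{p+1}}a_{ji}$, then summing this bound over all pairs with $i\in S,\ j\notin S$ yields~\eqref{eq:reciprocity} for that $S$ with the same $K$ (the two sides of~\eqref{eq:reciprocity} run over the \emph{same} index set, so the inequality passes term by term). It then suffices to control one direction against the other on each interval. For this I would use a covering estimate: on $I=[t_p,t_{p+1}]$ let $E=\{t\in I: a_{ij}(t)>0\}$; every $t\in E$ lies in a window $[\utij,\otij]$ of length at most $T$ over which $\int a_{ji}\ge\eps$. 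Selecting, by a Vitali-type argument, a disjoint subfamily $\{W_1,\dots,W_N\}$ of such windows contained in $I$ gives $\int_I a_{ji}\ge \eps N$, while the bounded overlap of the enlarged windows bounds $|E|$ and hence $\int_I a_{ij}\le M'|E|\lesssim (M'T/\eps)\,\int_I a_{ji}$. This produces a uniform constant $K$ of order $M'T/\eps$.

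The main obstacle, and what dictates the construction of $(t_p)$, is \emph{boundary leakage}: a burst of $a_{ij}$ occurring within distance $T$ of an endpoint of $I$ may have its reciprocal $a_{ji}$-mass fall into the neighbouring interval, so that $\int_I a_{ji}$ never sees it. A single interaction at the very end of an interval whose reciprocal lies just beyond the endpoint makes the ratio blow up, which is precisely why a naive uniform grid fails. The remedy is to choose each endpoint so that no reciprocity window straddles it: I would select $t_{p+1}$ inside a candidate range of the form $[t_p+L,\ t_p+L+T]$ at a point that is not interior to any window separating the two directions, which keeps the gaps in $[L,L+T]$ (preserving bounded gaps) while forcing every window meeting $I$ to be \emph{contained} in $I$. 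The delicate point to establish here — the genuine crux of the proof — is that an admissible endpoint always exists in each candidate range; this relies on the windows having length at most $T$ and on there being finitely many pairs, so that the set of cut points to avoid is controlled. Once leakage is removed, the covering estimate applies on every interval, the ordered-pairwise bound holds with a uniform $K$, Assumption~\ref{as:reciprocity} follows, and Theorem~\ref{th:consensus-under-persistent-connectivity-and-upper-bound-on-cummulative-weights-and-reciprocity} yields convergence together with the persistent-connectivity characterization of the limits.
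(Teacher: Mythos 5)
Your overall architecture is the right one and matches the paper's: build a single sequence $(t_p)$ with uniformly bounded gaps, get Assumption~\ref{as:upper-bound-on-weights} for free from the uniform bound $M'$ on the weights, reduce the cut-balance inequality \eqref{eq:reciprocity} to an ordered-pairwise ratio bound by summing over the cut, and then invoke Theorem~\ref{th:consensus-under-persistent-connectivity-and-upper-bound-on-cummulative-weights-and-reciprocity}. Those reductions are all correct. The gap is exactly where you flag it: the existence of an ``admissible'' endpoint in a candidate range of length $T$ that no reciprocity window straddles. This claim is not justified by ``finitely many pairs, windows of length at most $T$'': Assumption~\ref{as:pairwise_reciprocity} provides (at least) one window for \emph{every} time $t$ at which $a_{ij}(t)>0$ or $a_{ji}(t)>0$, so the family of windows for a single pair is typically uncountable and its union can cover any candidate range (take $a_{ij}$ positive on a whole stretch, with the $a_{ji}$-mass concentrated in sparse spikes so that every valid window is forced to have length close to $T$; then every point of your range is interior to the unique usable window of some nearby $t$). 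So the cut point you need may simply not exist in a range of length $T$, and your Vitali-covering estimate, which requires every selected window to be \emph{contained} in $I$, cannot be run.

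The paper resolves this differently, and the difference is essential. It does not try to make windows fit inside the intervals; instead it establishes the weaker per-pair dichotomy $A_k$: on each $[t_k,t_{k+1}]$, either $a_{ij}\equiv 0$, or the pair is \emph{active} over the whole interval, i.e.\ $\int_{t_k}^{t_{k+1}}a_{ij}\ge\eps$ and $\int_{t_k}^{t_{k+1}}a_{ji}\ge\eps$ (Proposition~\ref{prop:build_tk}). Combined with the trivial upper bound $\int_{t_k}^{t_{k+1}}a_{ij}\le M'(t_{k+1}-t_k)\le M'M$, this gives the pairwise ratio $K=M'M/\eps$ directly --- your covering lemma becomes unnecessary. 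Constructing $t_{k+1}$ so that $A_k$ holds is itself nontrivial: the paper's Algorithm~\ref{algo} repeatedly extends a tentative endpoint by $T$, and termination is proved by showing that each extension strictly enlarges one of two monotone sets of pairs ($R_{\bar t}$, the pairs already active since $t_k$, and $V_{\bar t}$, the pairs silent until the horizon), each of cardinality at most $n(n-1)/2$. Consequently the gaps are bounded by roughly $M_1+M_2$ with $M_2>n(n-1)T+T$, not by $L+T$ as in your sketch. To repair your proof you would need to replace the ``avoid all window interiors'' step by an argument of this kind (or prove an equivalent termination bound); as written, the crucial existence claim is both unproved and false in general.
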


%\comjh{should we make a theorem out of this proposition?}

%asymptotic consensus takes place among agents belonging to the same connected components of the graph $G$ constructed with edges $(j,i)$ with persistent weights : $\int_0^\infty a_{ij}(t)dt = +\infty$, that is, $\lim_{t\to \infty}x_i 
%if Assumption~\ref{as:persistent-conn} (persistent connectivity) is also verified then the trajectory converges toward a consensus state.
%\end{theorem}

\vspace*{.2cm}

\begin{remark}\label{rem:about_mainresults}
Theorem \ref{th:consensus-under-persistent-connectivity-and-upper-bound-on-cummulative-weights-and-reciprocity} and Theorem \ref{th:pairwise_reciprocity} are stated for systems where the coefficients $a_{ij}(t)$ only depend on time, and the proof of Theorem \ref{th:consensus-under-persistent-connectivity-and-upper-bound-on-cummulative-weights-and-reciprocity} actually uses that fact. However, these results can directly be extended to solutions of systems with state-dependent coefficients $\tilde a_{ij}(t,x)$, with typically $a_{ij}(t,x)$ depending on $x_i$ and $x_j$. Indeed, suppose that $x$ is a solution of
\begin{equation}\label{eq:def_nonlinear}
x_i(t) = x(0) + \int_0^t \tilde a_{ij}(s,x(s))(x_j(s)-x_i(s))ds, 
\end{equation}
then $x$ is also a solution of the linear time-varying systems \eqref{eq:sys-integral} with ad hoc coefficients $a_{ij}(t) = \tilde a_{ij}(t,x(t))$, and Theorem \ref{th:consensus-under-persistent-connectivity-and-upper-bound-on-cummulative-weights-and-reciprocity} applies to that linear time-varying system. 
%\del{Verifying if such nonlinear systems satisfy Condition~\ref{eq:reciprocity} is in general difficult. But we will see on an example in Section \ref{sec:application} how it can be achieved when the structure of the interactions guarantees a sufficient reciprocity. Note also that the existence or uniqueness of a solution to nonlinear systems of the form \eqref{eq:def_nonlinear} is in general a complex issue.
%Similar extensions apply to randomized weights $a_{ij}$.}
\sam{Verifying if such nonlinear systems satisfy Assumption~\ref{as:reciprocity} can be achieved when the structure of the interactions guarantees a sufficient reciprocity. We will see on an example in Section \ref{sec:application} how this can be done. Note also that the existence or uniqueness of a solution to nonlinear systems of the form \eqref{eq:def_nonlinear} is in general a complex issue.
Similar extensions apply to randomized weights $a_{ij}$.}

Finally, one can verify that Theorem \ref{th:consensus-under-persistent-connectivity-and-upper-bound-on-cummulative-weights-and-reciprocity} and Theorem \ref{th:pairwise_reciprocity} can be extended to systems with agent values $x_i$ in $\Re^n$ provided that the weights $a_{ij}$ remain scalar. It suffices indeed in that case to apply the result separately to each component of the states $x_i$.
\end{remark}
\vspace*{.2cm}

\section{\jmh{Examples}}
\label{sec:examples}

\subsection{System with non-instantaneous reciprocity}
\label{sec:convergent-toy-system-with-delayed-reciprocity}

In this subsection, we present two simple 4-agent systems whose convergence can be established by Theorem~\ref{th:consensus-under-persistent-connectivity-and-upper-bound-on-cummulative-weights-and-reciprocity} and by no other result on consensus available in the literature.
%  that does not verify any of the results on consensus available in the literature despite its convergence.  whose convergence can thus only be established thanks to Theorem~\ref{th:consensus-under-persistent-connectivity-and-upper-bound-on-cummulative-weights-and-reciprocity} to show its convergence.

\begin{figure}
\centering
\begin{tabular}{cc}
\includegraphics[scale = .37]{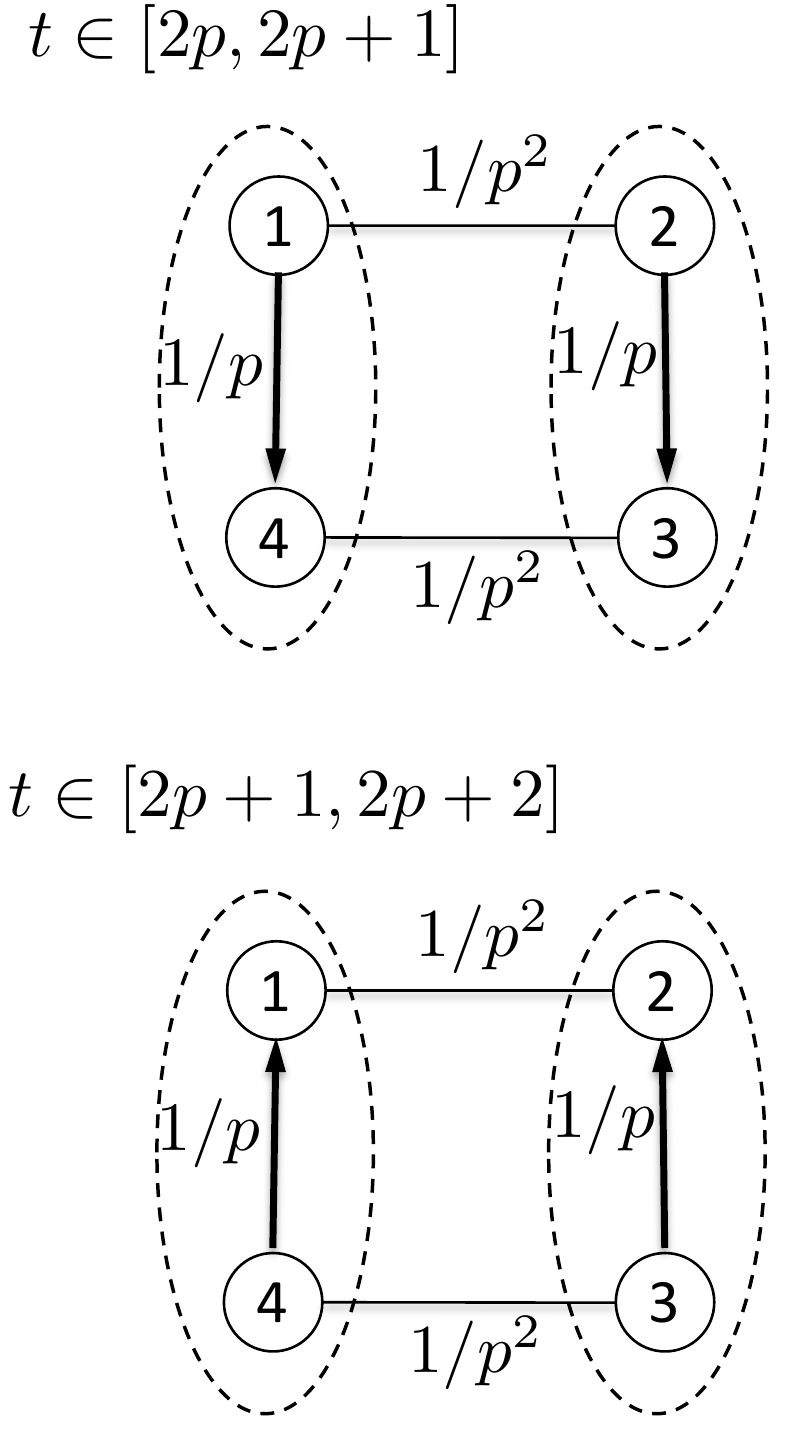}&
\includegraphics[scale = .37]{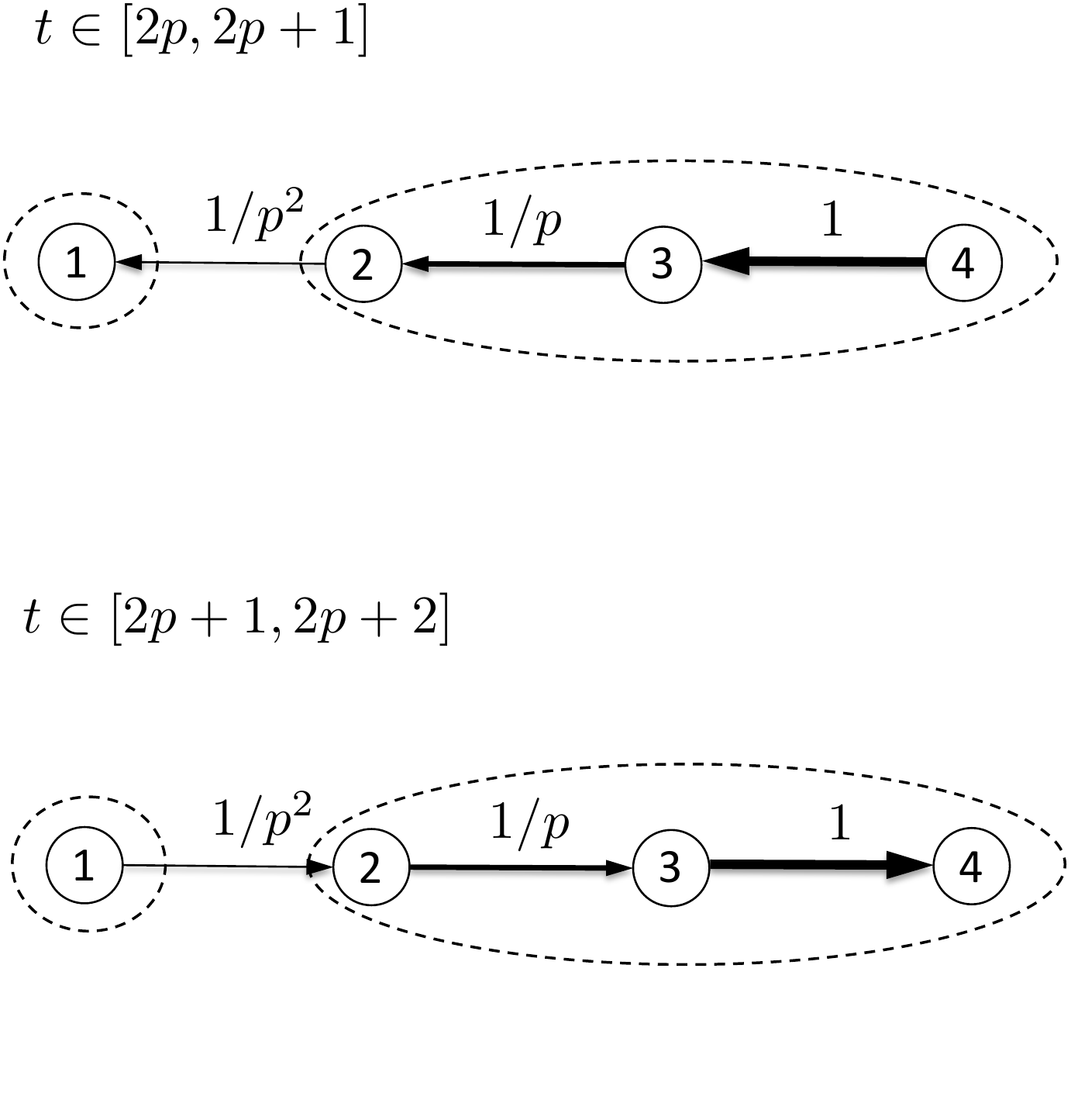}\\
(a) &(b)\end{tabular}\caption{Representations of the interactions taking place in example 1 (a) and in example 2 (b) in Section \ref{sec:convergent-toy-system-with-delayed-reciprocity}, and of the connected components of the graph of persistent interactions, in which local consensuses occur.}
\label{fig:ex}
\end{figure}

%\julf{\comjh{consensuses seems to be the correct word. See e.g. }}
%\begin{verbatim} http://dictionary.reference.com/browse/consensus
%\end{verbatim}

\vspace*{.2cm}
\emph{Example 1:}
\vspace*{.2cm}

%\jmh{\comjh{figure corrected}}

Our first example is depicted in \julf{Fig.} \ref{fig:ex}(a). It contains two weakly interacting subsystems, inside each of which two agents succesively attract each other.
More specifically, the interactions start at time $t=2$ and are defined as follows: For every $p\geq 1$,
\begin{itemize}
\item if $t\in [2p,2p+2]$, $a_{12}=a_{21}=a_{34}=a_{43}= 1/{p^2}$,
\item if $t\in [2p,2p+1]$, $a_{32} = a_{41} = 1/p$,
\item if $t\in [2p+1,2p+2]$, $a_{23} = a_{14} = 1/p$,
\end{itemize}
and all values of $a_{ij}(t)$ that are not explicitly defined are equal to 0.
One can verify that this system satisfies Assumptions \ref{as:reciprocity} and \ref{as:upper-bound-on-weights} with $t_p = 2p$, $K=1$ and $M=2$. We can thus apply Theorem \ref{th:consensus-under-persistent-connectivity-and-upper-bound-on-cummulative-weights-and-reciprocity} to establish its convergence. The graph of persistent interactions can also easily be built and contains the edges $(2,3),(3,2),(1,4)$ and $(4,1)$. There are thus two connected components  $\{2,3\}$ and $\{1,4\}$, and two local consensuses $x_2^*=x_3^*$ and $x_1^* = x_4^*$.
 
On the other hand, notice that the system does not satisfy any instantaneous 
reciprocity condition, so none of available reciprocity-based results applies.
Moreau's result does not apply either due to the weak interactions in $1/p^2$ between the subsystems (the interactions are not lower bounded; see Section 3.3 in~\cite{SamAntoine_Persistent_SICON2013} for a detailed explanation), and because it can only imply convergence to a global consensus while this system produces two local \jmh{consensuses.}
%implies convergence 
%\jmh{\comjh{not sure to fully understand, but ok to keep it as is for the CDC}}
%\comjh{I'm not too sure anymore that the next comment is relevant. feel free to remove it}
%\comsam{i'm happy with it}
Observe \jmh{also} that our result also applies if the interactions are interrupted during arbitrarily long periods. Suppose indeed that the interactions defined above do not take place during the intervals $[2p,2p+1]$ and $[2p+1,2p+2]$ but during the intervals $[p^2,p^2+1]$ and $[p^2 + p ,p^2 + p +1]$. Assumptions \ref{as:reciprocity} and \ref{as:upper-bound-on-weights} still apply with $t_p = p^2$.

\vspace*{.2cm}

\emph{Example 2:}
 \vspace*{.2cm}

The second example involves a chain of four agents, which are attracted by their higher index neighbor for $t\in[2p,2p+1]$ and their lower index neighbor for $t\in [2p+1,2p+2]$, as depicted in \julf{Fig.} \ref{fig:ex}(b). Moreover, the \jmh{ratios} between weights of the different interactions grow unbounded.

Specifically, the interactions start again at $t=2$, and for each $p\geq 1$,
% we have
\begin{itemize}
\item if $t \in [2p,2p+1]$, $a_{12} = 1/p^2$, $a_{23} = 1/p$ and $a_{34} = 1$
\item if $t \in [2p+1,2p+2]$, $a_{21} = 1/p^2$, $a_{32} = 1/p$ and $a_{43} = 1$
\end{itemize}
and all values of $a_{ij}(t)$ that are not explicitly defined are equal to 0. One can verify again that Assumptions \ref{as:reciprocity} and \ref{as:upper-bound-on-weights} hold with $t_p = 2p$, $K=1$ and $M=2$, so that the convergence of the system follows from Theorem \ref{th:consensus-under-persistent-connectivity-and-upper-bound-on-cummulative-weights-and-reciprocity}. The graph of persistent interactions contains the  edges $(2,3),(3,2),(3,4)$ and $(4,3)$, resulting in a local (trivial) consensus of agent $1$, and a consensus between agent 2, 3 and 4.

%\comjh{Can you check if you agree with the second part of the comment?} \comsam{I'm ok with it}
Again, the system satisfies no instantaneous reciprocity condition, so none of available reciprocity-based results applies. Moreover, all the results of which we are aware and that do not rely on reciprocity require the interaction to be bounded from above and from below, and establish convergence to a global consensus (see \cite{Moreau2004stability} for example). Since the \jmh{ratios} between the values of $a_{34},a_{43}$ and $a_{32},a_{23}$ grow unbounded and the system produces again two local consensuses, it would thus be impossible to apply them. This remains the case even if we restrict our attention to the connected component $\{2,3,4\}$ and/or re-scale the values of the coefficients by scaling time.
%\jul{\comjh{est-ce qu'on insiste pas trop sur le global consensus vs local}}

%\begin{comment}The strength of non-zero interactions are constant over time-interval $[4p,4p+1)$. These strength are of two types :
%\begin{itemize}
% \item (P) persistent : $1/p$,
% \item (NP) non-persistent : $1/p^2$.
%\end{itemize}
%Over interval $[4p,4p+1)$, four interactions take place asynchronously each during one unit of time, in this order : $(1,2)$ of type (NP) ;  $(2,3)$ of type (P) ; $(3,4)$ of type (NP) ; $(4,1)$ of type (P). Moreover we assume that interactions $(2,3)$ and $(4,1)$ are completely symmetric while the other two are unidirectional. The symmetry is necessary for the delayed reciprocity Assumptions~\ref{as:reciprocity} to be valid. \\
%stem clearly has a uniform bound on integral weights (Assumption~\ref{as:upper-bound-on-weights}). Thus, it converges in two consensus clusters corresponding to persistent interactions : $\{2,3\}$ and $\{1,4\}$. 
%\end{comment}
%

% It is also impossible to re-scale 

Besides, Theorem 1 would \jmh{again} apply exactly in the same way if the interactions 
%between 1 and 4 were for example evolving 
were interrupted during arbitrary long periods of time
%, as would for example be the case if the interactions 

\subsection{Oscillatory behavior under integral reciprocity - \julien{Necessity of Assumption \ref{as:upper-bound-on-weights}.}}\label{sec:oscillatory-behavior}

%proof of Proposition~\ref{prop:contre-example-ass-persistent-conn-ass-reciprocity}}
\label{sec:proof-prop-reciprocity-upper-bound-counter-example}

The following Proposition formalizes the fact that Assumption \ref{as:reciprocity} alone is not sufficient to guarantee convergence.  
%\ref{th:consensus-under-persistent-connectivity-and-upper-bound-on-cummulative-weights-and-reciprocity} does not hold withou
\vspace{.1cm}
\begin{proposition}
\label{prop:contre-example-ass-persistent-conn-ass-reciprocity}
\jmh{There exist systems of the form \eqref{eq:sys-integral} satisfying 
Assumption~\ref{as:reciprocity} (integral reciprocity) and that admit non-converging trajectories.}
% that do not converge.

%\del{There exist trajectories to system~\eqref{eq:sys-integral} satisfying
%%Assumptions~\ref{as:persistent-conn} (persistent connectivity)
%Assumption~\ref{as:reciprocity} (reciprocity) which do not converge.}
%\comjh{the previous statement was not true 'for every system of the form \eqref{eq:sys-integral}}
\end{proposition}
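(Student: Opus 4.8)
The plan is to exhibit an explicit counterexample, and the first step is to see that two agents cannot work: for $n=2$, writing $\delta=x_2-x_1$ one gets $\dot\delta=-(a_{12}+a_{21})\delta$ at almost all times, so $|\delta|$ is monotonically non-increasing and the system converges no matter how the integral reciprocity is scheduled. Hence I would build a \emph{three}-agent system with agents $A,B,C$, where $A$ plays a persistent ``high'' role, $B$ a persistent ``low'' role, and $C$ is driven to oscillate between the two.

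For the construction I would let only the pairs $\{A,C\}$ and $\{B,C\}$ interact, so that $a_{AB}=a_{BA}\equiv 0$. On the $p$-th period $[t_p,t_{p+1}]=[4p,4p+4]$ I switch on a single constant weight equal to $c_p$ on each of the four unit sub-intervals, in the order $a_{CB}$, then $a_{BC}$, then $a_{CA}$, then $a_{AC}$, with $c_p\to+\infty$. Over each whole period one then has $\int a_{CB}=\int a_{BC}=c_p$ and $\int a_{CA}=\int a_{AC}=c_p$ while $\int a_{AB}=\int a_{BA}=0$, so every cut-balance inequality of Assumption~\ref{as:reciprocity} holds with $K=1$; at the same time $\int_{t_p}^{t_{p+1}}a_{CB}=c_p$ is unbounded, so Assumption~\ref{as:upper-bound-on-weights} fails, exactly as the statement requires.

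The mechanism to analyze is the following. On the first sub-interval $B$ is frozen and $x_C$ relaxes toward $x_B$, reaching it up to a residual of order $e^{-c_p}$; on the second sub-interval $C$ is frozen at this value and $x_B$ relaxes toward it, so $B$ moves by at most that residual. The symmetric statement holds for $A$ on the last two sub-intervals. One checks that $A$ therefore carries the running maximum and $B$ the running minimum at all times, so $x_A$ is non-increasing and $x_B$ non-decreasing, and both converge. Since the per-period displacements of $x_A$ and $x_B$ are bounded by $2e^{-c_p}$, taking $c_p$ to grow fast enough that $\sum_p e^{-c_p}$ is small keeps the limits distinct, $\bar x_A>\bar x_B$. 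Meanwhile $x_C(4p+2)\to\bar x_B$ and $x_C(4p+4)\to\bar x_A$, so $x_C$ has two distinct subsequential limits and the trajectory does not converge.

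The delicate point is not the reciprocity bookkeeping, which is immediate from per-pair integral symmetry, but ensuring that the spurious motions of $A$ and $B$ forced by the reciprocal weights do not accumulate and collapse the gap $\bar x_A-\bar x_B$ to zero. This is precisely why each pair's reciprocal weight is fired \emph{immediately after} the corresponding snap, when the two endpoints already coincide up to $e^{-c_p}$, and why I would impose $c_p\to\infty$ fast enough that $\sum_p e^{-c_p}<\infty$: the same condition makes the snapping of $C$ asymptotically exact, so that its two subsequential limits are genuinely $\bar x_A$ and $\bar x_B$, and keeps the perturbations of $A$ and $B$ summable so that the diameter stays bounded away from zero.
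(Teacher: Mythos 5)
Your construction is, up to relabeling ($A,B,C \leftrightarrow 3,1,2$), exactly the paper's counterexample: the same four-phase schedule per period $[4p,4p+4]$ with unbounded weights $c_p=\rho_p$, the same $K=1$ integral reciprocity, and the same key estimate that each reciprocal weight fires only after the two endpoints are within $e^{-c_p}$ of each other, so the outer agents' total drift is summable and the middle agent oscillates forever. The proposal is correct and follows essentially the same approach as the paper (the paper tracks the gap via the product $\prod_p(1-e^{-\rho_p})$ rather than summing per-period perturbations, which is an equivalent bookkeeping).
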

\vspace{.2cm}
To prove the Proposition, we present a 3-agent system which satisfies Assumption~\ref{as:reciprocity} (reciprocity) but whose trajectory does not converge. 
%Unlike the example of the previous section, this system does not satisfies Assumption~\ref{as:upper-bound-on-weights}. This gives a proof to Proposition~\ref{prop:contre-example-ass-persistent-conn-ass-reciprocity}.
%In this system, 
\julien{The idea is to have \samf{agent 2} oscillating between agents \samf{$1$ and $3$} that successively attract the former while remaining at a certain distance from each other, as} depicted in \julf{Fig.~\ref{fig:dynamic-example}.}
Agent $1$ starts influencing $2$. Since we only impose integral reciprocity, $a_{12}$ and $a_{21}$ do not have to be non-zero simultaneously. Also, because there is no uniform bound on influence, the distance between $2$ and $1$ has become arbitrarily close to $0$ when agent $2$ starts influencing back. So the overall influence of agent $2$ over $1$, this is $\int a_{12}\cdot(x_2 - x_1)dt$ over some time interval, can also be made arbitrarily small.
This leads to an actual influence of $1$ over $2$ but not of $2$ over $1$. The same happens between $3$ and $1$, leading to convergence of $1$ and $3$ to distinct limits and oscillations of $2$. We now present the formal proof.

\begin{proof}
Let $(\ratio_p)_{p\in \N}$ be a non-decreasing sequence such that $\ratio_p\ge 1$, for all $p\in \N$. 
%Let $T>0$ be some constant time duration. Let $\tau = T/4$.
Let us consider a system with $3$ agents where $x_1(0)=0$, $x_2(0)=1/2$ and $x_3(0)=1$ and with the dynamics given by system~(\ref{eq:sys-integral}) with weights
$$
\left\{
\begin{array}{ll}
\text{ if } t\in [4p,4p+1), & a_{21}(t) = \ratio_p, \\
\text{ if } t\in [4p+1,4p+2), & a_{12}(t) = \ratio_p, \\
\text{ if } t\in [4p+2,4p+3), & a_{23}(t) = \ratio_p, \\
\text{ if } t\in [4p+3,4p+4), & a_{32}(t) = \ratio_p,
\end{array}
\right.
$$
where only the non-zero weights have been detailed. \julf{Fig.~\ref{fig:dynamic-example}} illustrates the dynamics of this system.
% of weights always greater than $2$.

\begin{figure}[!htbp]
\begin{center}
%trim=left bottom right top
\includegraphics[scale=0.50,clip = true, trim=0cm 1.8cm 0cm 0cm,keepaspectratio]{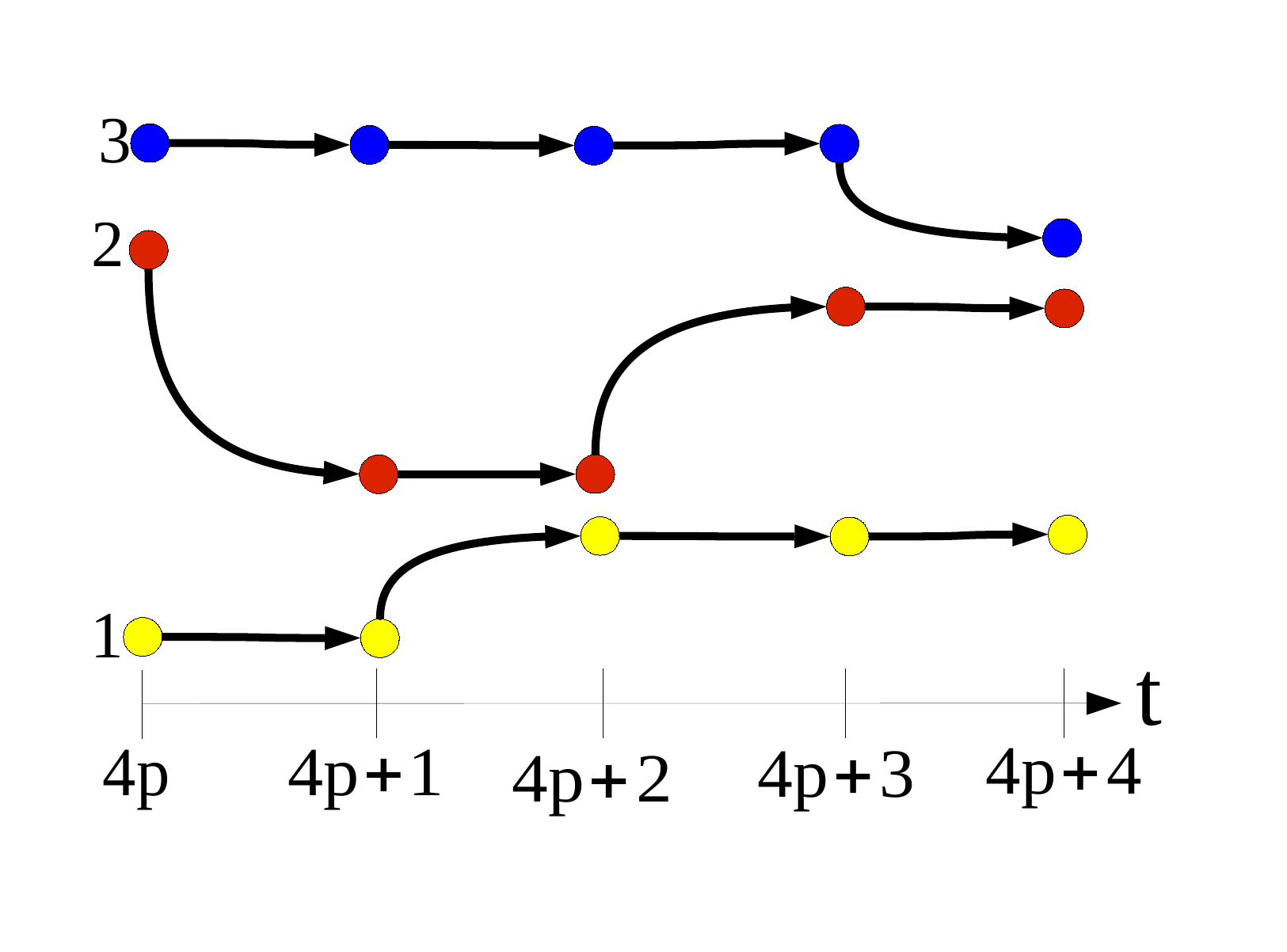}
\caption{\label{fig:dynamic-example}Dynamics of the 3-agent system.}
\end{center}
\end{figure}

%This system satisfies the persistent connectivity Assumption~\ref{as:persistent-conn} (with $(\NN,\tilde{\EE})$ being the undirected line graph). 
Here, Assumption~\ref{as:reciprocity} holds with $K = 1$ \julien{for $t_p = 4p$}.
It is easy to see that $x_1(t)$ is non-decreasing, $x_3(t)$ is non-increasing and $x_1(t)\le x_2(t) \le x_3(t)$ for all $t\ge 0$. Integrating the dynamics of the system, we can show that for all $p\in \N$:
$$
\begin{array}{lll}
x_1(4p+4) &=& x_1(4p+2) \le x_2(4p+2) = x_2(4p+1) \\
&=& (1-\lp) x_1(4p) + \lp x_2(4p) \\
&\le& (1-\lp) x_1(4p) + \lp x_3(0),
\end{array}
$$
%where $\lp = e^{-(\ratio_p)}$, 
and that
$$
\begin{array}{lll}
x_3(4p+4) &\ge& x_2(4p+4) = x_2(4p+3) \\
&=& \lp x_2(4p+2) + (1-\lp) x_3(4p+2) \\
&\ge& \lp x_1(4p) + (1-\lp) x_3(4p) \\
&\ge& \lp x_1(0) + (1-\lp) x_3(4p).
\end{array}
$$
Combining the two previous results and the initial conditions gives us then
\begin{small}
$$
1 + (x_3(4p+4) - x_1(4p+4))
\ge (1-\lp) \left( 1 + (x_3(4p) - x_1(4p)\right).
$$
\end{small}
\noindent 
We observe that the term $1 + (x_3(4p) - x_1(4p))$ remains larger than the product $(1 + (x_3(0) - x_1(0)))\Pi_{p'=0}^p (1- e^{-\rho_{p'}})$.  % la séquence géometrique à facteur variable était un peu ambigue je trouvais
%geometric sequence of scale factor $(1-\lp)$
Taking a sequence $\ratio_p$ growing sufficiently fast (and thus breaking the uniform bound Assumption~\ref{as:upper-bound-on-weights}), one can make this term converge to a value arbitrarily close to its initial value 2.
% leads \jul{thus} to convergence of this term arbitrarily close its initial value $2$. 
Then, $(x_3(4p))$ and $(x_1(4p))$ do not converge to the same value. As a consequence, \jmh{one can verify that} $x_2$ will keep oscillating between $x_1$ and $x_3$. Hence, the system does not converge. 
%\comjh{I've changed the sentence because we do not prove the oscillation of $x_2$ explicitly, although it's quite obvious}
\end{proof}

%\begin{comment}
%It remains to show that the diameter $\Diam_p = x_3(4p) - x_1(4p)$ does not converge towards $0$, in which case it is easy to see that $x_2$ oscillates between the two converging positions of $x_1$ and $x_3$. Thus, the system does not converge. Rewriting the previous equation, we have
%$$
%\Diam_{p+1} + \Diam_0 \ge (1-\lp) (\Diam_p +  \Diam_0).
%$$ 
%We can iterate to obtain for $p\in \N$,
%$$
%\Diam_p \ge (-1+2 \pprod_{q=0}^{p-1} (1-\lqq) ) \Diam_0 \ge (-1+2 \pprod_{q=0}^{\infty} (1-\lqq) ) \Diam_0.
%$$
%Thus, $(\Diam_p)$ does not converge toward $0$ if
%$\prod_{q=0}^{\infty} (1-\lqq) > \frac{1}{2}$ which we can subsenquently rewrite as follows :
%$$
%\begin{array}{lll}
%\ssum_{q=0}^{\infty} \ln (1-\lqq) > \ln\frac{1}{2} &\Leftrightarrow& \ssum_{q=0}^{\infty} -2\frac{\ln (1-\lqq)}{\ln 2} < 2.
%\end{array}
%$$
%Using a well-known result on infinite series, the previous inequality is satisfied by taking the term of the series to be smaller than $1/2^q$ which rewrites to
%$$
%\forall q\in \N, \lqq < 1 - e^{-\frac{\ln 2}{2^{q+1}}}.
%$$
%Thus, choosing sequence $(\rho_q)$ so that
%$$
%\forall q\in \N, \rho_q > -\ln (1 - e^{-\frac{\ln 2}{2^{q+1}}})
%$$
%provides us with an example where the trajectory does not converge despite 
%Assumptions~\ref{as:persistent-conn} (persistent connectivity) and~\ref{as:reciprocity} (reciprocity) being satisfied.
%\end{comment}

\section{Application to mobile robots with intermittent ultrasonic communication}
\label{sec:application}

In this section we apply our results to a realistic system of mobile robots evolving in the plane $\Re^2$ and communicating using ultrasonic sensors. These sensors make for an affordable and thus widespread contactless mean of measuring distances~\cite{Carullo2001}, but are subject to certain limitation as detailed below.
%\comsam{Here add references on ultrasonic communication} \comjh{Tu en connais?}. 
The objective of the group of robots is to achieve practical rendezvous, \ie all robots should eventually lie in a ball of a certain maximal radius (see e.g.\cite{ceragioli2011discontinuities}). The robots have several functional constraints. The ultrasonic sensors in use are not accurate when measuring distances smaller than a radius $d_0>0$, thus we assume that the robots cannot make use of such measurements and are blind at short range. Also, the robots' engines are limited and the velocity of each robot cannot exceed a maximum of $\mu>0$ in norm. Most importantly, in order to save energy, the robots activate their sensors intermittently, and in an asynchronous way:
Robot $i$ wakes up at every time $t_k^i$, and monitors its environment over the time-interval 
$[t_k^i,t_k^i+\delta_{\min}]$, 
for some $\delta_{\min}>0$.
(For simplicity, we take the same $\delta_{\min}$ for every robot, but this is not crucial for our result). In addition, we assume that the sequence $(t_k^i)$ satisfies $t_{k+1}^i - t_k^i \in [\dmin, \dmax]$ for every $k\in \N$, for some $\dmax > \dmin$, and $t_0^i  \leq \dmax$.

%\comjh{changed the order to better separate the system description from the algorithm}

%where the sequence $(t_k^i)$ satisfies $t_{k+1}^i - t_k^i \in [\dmin, \dmax]$

%For a robot $i$, the sequence $(t_k^i)$ describes the time instants where the robot starts monitoring its environment using its ultrasonic sensor. The monitoring occurs over time intervals $[t_k^i, t_k^i + \dmin]$ with $\dmin>0$ some given constant, and we assume the sequence satisfies that for all $k\in \N$,
%$$
%t_{k+1}^i - t_k^i \in [\dmin, \dmax],
%$$
%where $\dmax>\dmin$ and $t_0^i \le \dmax$.

We will provide a simple control law for the robots ensuring some form of non-instantaneous reciprocity. Our result in Section~\ref{sec:pb-statement} will then allow us to establish (i) the convergence of all robot positions, and (ii) asymptotic practical consensus, that is, 
all robots eventually lie at a distance from each other smaller than a certain threshold. This threshold is proportional to $d_0$,
%consensus on the positions up to a difference that is bounded proportionally to $d_0$, 
the distance below which robots cannot sense each other. Since it converges, the system will not suffer from infinite oscillatory behaviors as in the example presented in Section~\ref{sec:proof-prop-reciprocity-upper-bound-counter-example}. To the best of our knowledge, such results cannot be obtained with any other convergence result available in the literature. \sam{One reason for this is that most results on consensus in the literature apply to systems which converge to a single consensus. This is clearly not the case for the system considered here since agents stop interacting at short distance.}
%\comsam{Add a comment explaining why Moreau or other results don't work here, see comment in the reply to the reviewers.}

%\del{Despite these constraints, our result in Section~\ref{sec:pb-statement} allow us to provide a simple algorithm guaranteeing (i) the convergence of all robots, and (ii) practical consensus, that is, that the distance between any two robots is eventually upper bounded proportionally to $d_0$, the distance below which robots cannot sense each other.} 
%\comjh{on n'a pas dit dans quel espace était les positions des robot. Le plus simple est de dire dans Re mais que les résultats s'appliquent dans Re n, mais alors pourquoi utiliser des normes? Sinon on peut dire dans R.2}

%the robots will still achieve rendezvous. In particular, the present settings does not lead to oscillatory behavior as seen in the example presented in Section~\ref{sec:proof-prop-reciprocity-upper-bound-counter-example}. The main results presented in Section~\ref{sec:pb-statement} will enable us to establish this fact.

%For a robot $i$, the sequence $(t_k^i)$ describes the time instants where the robot starts monitoring its environment using its ultrasonic sensor. The monitoring occurs over time intervals $[t_k^i, t_k^i + \dmin]$ with $\dmin>0$ some given constant, and we assume the sequence satisfies that for all $k\in \N$,
%$$
%t_{k+1}^i - t_k^i \in [\dmin, \dmax],
%$$
%where $\dmax>\dmin$ and $t_0^i \le \dmax$. 

Our control law can be expressed as the following saturated consensus equation: %To achieve their objective, the robots follow a saturated consensus system :
\begin{equation}\label{sys:sat-consensus}
 \dot{x}_i(t) = \sat \ssum_{j\in \NN} b_{ij}(t) (x_j(t)-x_i(t)),
\end{equation}
where the $b_{ij}(t)$ will be specified later, and the function $\sat : \Re^n\to \Re^n$ is defined by
\begin{equation*}%\label{eq:sat-function}
\sat(x) = 
\left\{
\begin{array}{ll}
 \mu \cdot \frac{x}{\|x\|} & \text{if } \|x\| \ge \mu \\
 x & \text{otherwise.}
\end{array}
\right.
\end{equation*}
The saturation guarantees that the \samf{magnitude of} the velocity of each robot remains below its limit.
We now explicit how the interaction weights $b_{ij}$ are set.  
%By default, $b_{ij}(t)=0$. Now, we consider two cases. 
The idea is represented in \julf{Fig.} \ref{fig:application}: \sam{For $t \in [t_k^i,t_k^i+\delta_{\min}]$, agent $i$ monitors its environment. At this time, agent $i$ sets $b_{ij}(t)$ to 1 whenever either one of the two following situations occurs : 1) its distance to $j$ is larger than some appropriate radius $d_1>d_0$ (\initiate), or 2) its distance to $j$ is larger than $d_0$ and $j$ has recently been influenced by $i$ ($b_{ji} = 1$) because $j$ was at a distance larger than $d_1$ from $i$ at that time (\reply)}. The latter part of the algorithm is designed to ensure reciprocity, and the presence of $d_1$ is needed to ensure that $i$ and $j$ remain sufficiently distant for measurement to be made when $i$ or $j$ need to reciprocate.

\begin{figure}
\centering
%trim=left bottom right top
\includegraphics[scale = .32,trim=0cm 2cm 0cm 0.5cm]{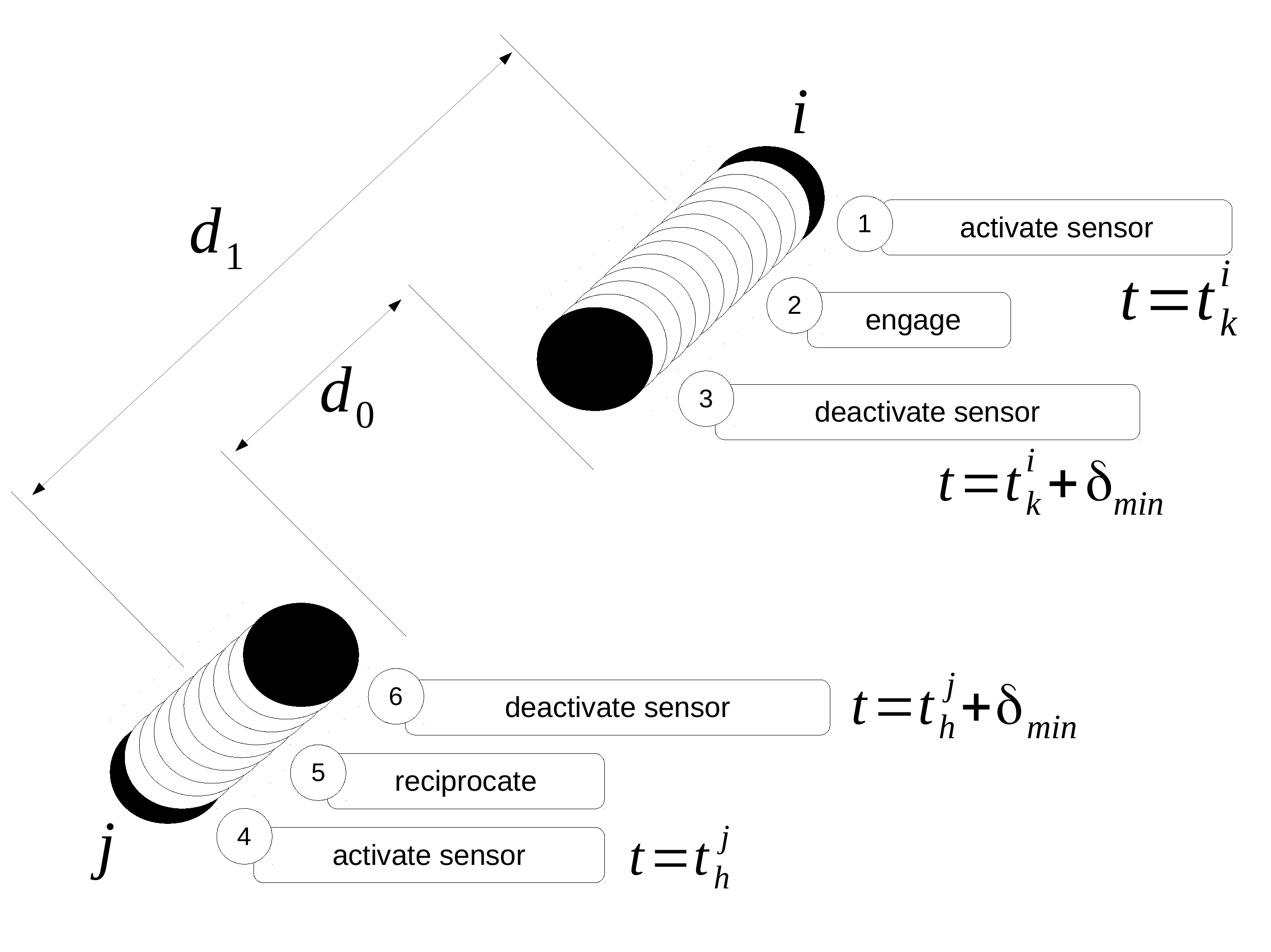}
\caption{Representations of the interactions taking place in the group of mobile robots with intermittent ultrasonic communication presented in Section \ref{sec:application}. Events $1$, $2$ and $3$ occur successively and so do events $4$, $5$ and $6$. Event $4$ occurs after event $1$ and the following condition holds : $t_h^j \in [t_k^i, t_k^i+\dmax]$. When $2$ occurs, $a_{ij}(t) >0$ and when $4$ occurs $a_{ji}(t)>0$. Proposition~\ref{prop:application} provides conditions which guarantee that event $4$ always takes place when event $2$ has occurred, this ensures interaction reciprocity.}
\label{fig:application}
\end{figure}
%\comjh{moved figure in this section, it was hard to understand in the introduction}

Formally, we set $b_{ij}(t)=0$ by default, and set it to 1 in two cases:\\
%
%\begin{equation}\label{eq:aij-is-one}
% b_{ij}(t) = 1
%\end{equation}
\sam{
$i$ \initiates
\begin{equation}\label{eq:cond-initiate}
\begin{array}{l}
\exists k \in \N, \left( t \in [t_k^i, t_k^i + \dmin] \text{ and } \|x_i(t_k^i)-x_j(t_k^i)\| \ge d_1 \right),
\end{array}
\end{equation}
$i$ \replies
\begin{equation}\label{eq:cond-reply}
\begin{array}{l}
\exists h \in \N,\\
\left\{
\begin{array}{l}
t \in [t_h^j, t_h^j + \dmin] \text{ and } \|x_i(t)-x_j(t)\| \ge d_0  \text{ and}
\vspace{0.2cm}
\\
\exists k \in \N, \\
 t^i_k \in [t_h^j - \dmax, t_h^j] \text{ and } \|x_i(t_k^i)-x_j(t_k^i)\|\ge d_1 .
%\|x_j(t_h^j)-x_i(t_h^j)\|\ge \d_0 + \dmin \cdot \mu 
\end{array}
\right.
\end{array}
\end{equation}
}

\begin{remark}
Condition~\eqref{eq:cond-initiate} can be easily implemented. To implement Condition~\eqref{eq:cond-reply}, $i$ has to keep in memory the last activation time $t_h^j$ at which the distance between $i$ and $j$ was higher than $d_1$. This could for example be achieved by having $j$ sending a message to $i$ at $t_h^j$.
\end{remark}

\vspace{.1cm}
Under these communication rules, we have the desired result :

%\comjh{we seem to imply that the  position x is in Ren , but our results are only stated for x in R. This should be clarified somehow. To check..}
\vspace{.2cm}
\begin{proposition}\label{prop:application}
Consider system~\eqref{sys:sat-consensus} where interaction occurs according to Conditions~\eqref{eq:cond-initiate} and~\eqref{eq:cond-reply}. Also assume there holds
\begin{equation}\label{eq:relation-dmax-mu-d1-d0}
4 \dmax \cdot \mu \le d_1 - d_0. 
\end{equation}
Then, the group of robots asymptotically achieves practical rendezvous: $x_i^*=\lim_{t\to\infty}x_i(t)$ exists for every $i\in \NN$, and  
$$
\llim_{t \rightarrow \infty} \Delta(t) \le d_1,
$$
where $\Delta(t) = \max_{i,j\in \NN}||x_i(t)-x_j(t)||$.
\end{proposition}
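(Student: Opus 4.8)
The plan is to reduce system~\eqref{sys:sat-consensus} to a linear time-varying system of the form~\eqref{eq:sys-integral} with scalar weights, and then to verify Assumption~\ref{as:pairwise_reciprocity} so as to invoke Theorem~\ref{th:pairwise_reciprocity}. First I would absorb the saturation into the weights: at each time the saturated velocity can be written $\sat(\ssum_{j} b_{ij}(x_j - x_i)) = c_i(t)\ssum_{j} b_{ij}(x_j-x_i)$ for a scalar $c_i(t)\in(0,1]$, so that $\dot x_i = \ssum_j a_{ij}(t)(x_j - x_i)$ with $a_{ij}(t) := c_i(t) b_{ij}(t)$ non-negative and scalar. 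Since each $\dot x_i$ points toward a convex combination of the other positions, the convex hull of $\{x_i(t)\}$ is non-increasing, so $\Delta(t)\le\Delta(0)$ and the unsaturated velocity has norm at most $(n-1)\Delta(0)$; hence $c_i(t)\ge c_{\min}:=\min(1,\mu/((n-1)\Delta(0)))>0$ uniformly. By Remark~\ref{rem:about_mainresults} it then suffices to work with the scalar weights $a_{ij}$ and apply the convergence results componentwise in $\Re^2$.

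Next I would verify Assumption~\ref{as:pairwise_reciprocity} for the $a_{ij}$, which are bounded above by $1$. The key geometric step, and the one where condition~\eqref{eq:relation-dmax-mu-d1-d0} is used, is the following: whenever $i$ engages $j$ via~\eqref{eq:cond-initiate} at a wake-up $t_k^i$ (so $\|x_i(t_k^i)-x_j(t_k^i)\|\ge d_1$), agent $j$ is guaranteed to wake up at some $t_h^j\in[t_k^i, t_k^i+\dmax]$ because consecutive wake-ups are at most $\dmax$ apart; and since both robots move at speed at most $\mu$, over the window $[t_k^i, t_k^i+2\dmax]$ the distance $\|x_i-x_j\|$ changes by at most $4\dmax\,\mu\le d_1-d_0$, hence stays $\ge d_0$. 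Therefore the reciprocation condition~\eqref{eq:cond-reply} is met and $b_{ji}=1$ holds over a subinterval of length $\dmin$, while $b_{ij}=1$ on $[t_k^i, t_k^i+\dmin]$. Consequently both $\int b_{ij}$ and $\int b_{ji}$ are at least $\dmin$ over the window $[t_k^i, t_k^i+2\dmax]$, so $\int a_{ij}$ and $\int a_{ji}$ are at least $c_{\min}\dmin$. Since every nonzero interaction is anchored to such an engagement, Assumption~\ref{as:pairwise_reciprocity} holds with $\eps=c_{\min}\dmin$ and $T_{ij}=2\dmax$, and Theorem~\ref{th:pairwise_reciprocity} yields both the convergence of every $x_i$ to a limit $x_i^*$ and a local consensus on each strongly connected component of the persistent graph $G$.

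Finally I would deduce practical rendezvous. Existence of the limits gives $\Delta(t)\to\max_{i,j}\|x_i^*-x_j^*\|$, so it remains to show $\|x_i^*-x_j^*\|\le d_1$ for every pair. Suppose instead $\|x_i^*-x_j^*\|>d_1$. Then for all large $t$ one has $\|x_i(t)-x_j(t)\|>d_1$, so at every wake-up $t_k^i$ agent $i$ engages $j$ and $b_{ij}=1$ on $[t_k^i,t_k^i+\dmin]$; as these windows recur indefinitely, $\int_0^\infty a_{ij}\ge c_{\min}\int_0^\infty b_{ij}=+\infty$. Thus $(j,i)$ is an edge of $G$, so by the reciprocity part of Theorem~\ref{th:consensus-under-persistent-connectivity-and-upper-bound-on-cummulative-weights-and-reciprocity} there is also a reverse path and $x_i^*=x_j^*$, contradicting $\|x_i^*-x_j^*\|>d_1>0$. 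Hence $\llim_{t\to\infty}\Delta(t)\le d_1$. I expect the main obstacle to be the reciprocity verification: establishing rigorously that the reciprocation is always triggered (the geometric argument relying on~\eqref{eq:relation-dmax-mu-d1-d0} and on the guaranteed wake-up of $j$ within $\dmax$), and controlling the saturation scaling $c_i$ uniformly so that it cannot destroy the lower bounds $\eps$ --- the latter being saved precisely by the boundedness of the convex hull.
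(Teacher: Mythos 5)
Your proposal is correct and follows essentially the same route as the paper: rewrite the saturated dynamics as \eqref{eq:sys-integral} with $a_{ij}=c_i b_{ij}$ and a uniform lower bound on the saturation factor via the non-increase of $\Delta(t)$, verify Assumption~\ref{as:pairwise_reciprocity} with $T_{ij}=2\dmax$ by using \eqref{eq:relation-dmax-mu-d1-d0} and the guaranteed wake-up of $j$ within $\dmax$ to show the \reply~rule fires, invoke Theorem~\ref{th:pairwise_reciprocity} (extended to $\Re^2$), and conclude by the same contradiction argument on persistent edges. The only differences are cosmetic (a bound $(n-1)\Delta(0)$ versus the paper's $n\Delta(0)$, and a terser treatment of the case where $a_{ij}(t)>0$ arises from the \reply~condition rather than the \initiate~condition, which the paper handles by an explicit symmetric argument).
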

\vspace{.2cm}

\begin{proof}
Observe first that system~\eqref{sys:sat-consensus} can be rewritten under the form of system~\eqref{eq:sys-integral} with
\begin{equation}\label{eq:aij-bij}
a_{ij}(t) = \frac{\mu \cdot b_{ij}(t)}{\|\ssum_{k \in \NN} b_{ik}(t) (x_k(t) - x_i(t)) \|}
\end{equation}
if
$
\|\sum_{k \in \NN} b_{ik}(t) (x_k(t) - x_i(t))\| \ge \mu
$
and $a_{ij}(t) = b_{ij}(t)$ otherwise. Since $b_{ik}(t)=0$ whenever $\|x_k(t) - x_i(t)\| < d_0$, $a_{ij}$ is upper bounded and thus is a non-negative measurable function, summable on bounded intervals of $\R^+$.

Moreover, since $\Delta(t)= \max_{i,j\in \NN}||x_i(t)-x_j(t)||$ is clearly nonincreasing, it follows from the definition of $a_{ij}(t)$ that 
\begin{equation}\label{eq:bound_aij/bij}
a_{ij}(t) \geq b_{ij}(t)\min\left(\frac{\mu}{n \Delta(0)},1\right),
\end{equation}
where $\Delta(0)$ is the initial group diameter.

In order to apply Theorem \ref{th:pairwise_reciprocity}, we now show that the system under intermittent ultrasonic communication described above satisfies Assumption~\ref{as:pairwise_reciprocity} with 
%$$\eps = \frac{\dmin \mu}{n \Delta(0)},
%$$
%????
$$
\eps = \min\left(\frac{\dmin \mu}{n \Delta(0)},\dmin\right) \text{ and } T =2\dmax.	
$$

%To end the proof, we apply Theorem \ref{th:pairwise_reciprocity} and the system converges. Finally, we establish that convergence can only occurs in a rendezvous state.

Let $t\ge 0$ such that $a_{ij}(t)>0$. Then, $b_{ij}(t)>0$ and at least one among Conditions~\eqref{eq:cond-initiate} and~\eqref{eq:cond-reply} is satisfied. Suppose first that Condition~\eqref{eq:cond-initiate} is satisfied and denote by $k$ the integer such that $t \in [t_k^i, t_k^i + \dmin]$. Clearly, Condition~\eqref{eq:cond-initiate} also holds for every $s\in [t_k^i, t_k^i + \dmin]$. 

We set $\underline t_{ij} = t_k^i$ and $\overline t_{ij}= t_k^i+2 \dmax \geq t_k^i + \dmin$. Clearly, there holds 
$t\in [\utij,\otij],$ and $\otij-\utij \leq 2 \dmax = T$, so that Conditions (a) and (b) of Assumption \ref{as:pairwise_reciprocity} hold.
Moreover, the non-negativity of $a_{ij}$ implies that
\begin{eqnarray*}
\int_{\utij}^{\otij}a_{ij}(s)ds &\geq & \int_{t_k^i}^{t_k^i+\dmin}a_{ij}(s)ds 
\\ 
&\geq & \min\left(\frac{\mu}{n \Delta(0)},1\right) \int_{t_k^i}^{t_k^i+\dmin}b_{ij}(s)ds\\&=& \min\left(\frac{\dmin \mu}{n \Delta(0)},\dmin\right) = \eps,
%&\geq & \sam{\min\left(\frac{\dmin \mu}{n \Delta(0)},\dmin\right)} \\ &=& \eps,
\end{eqnarray*}
where we have used \eqref{eq:bound_aij/bij} and the fact that $b_{ij}(s)=1$ for all $s\in [t_k^i, t_k^i + \dmin]$ since we have seen that Condition~\eqref{eq:cond-initiate} holds for those values. There remains to prove that $\int_{\utij}^{\otij}a_{ji}(s)ds\geq \eps$.

Since  $t_{h+1}^j - t_h^j \le \dmax$ for all $h\in \N$ and $t_0^i \le \dmax$, there exists $h\in \N$ such that $t^j_h \in [t_k^i, t_k^i + \dmax]$, and thus $[t_h^j, t_h^j +\dmax] \subseteq [t_k^i,t_k^i+2\dmax] = [\utij,\otij]$.
We show that the \reply~Condition \eqref{eq:cond-reply} is satisfied for every  $s\in[t_h^j, t_h^j +\dmax]$. The second part of the condition directly follows from $t_h^j\in [t_k^i, t_k^i + \dmax]$. For the first one, observe that $||\dot x_i||\leq \mu$ (and the same holds for $j$), and that $||x_i(t_k) - x_j(t_k)||\geq d_1$ by assumption. Therefore, for any time $s\in [t_h^j, t_h^j +\dmax]\subseteq [t_k^i,t_k^i+2\dmax] $, 
we have
%the distance 
% observe that $||\dot x_i||\leq \mu$, and that $||x_i(t_k) - x_j(t_k)||\geq d_1$ by assumption. Therefore, since $[t_h^j, t_h^j +\dmax] \subseteq [t_k^i,t_k^i+2\dmax] $, 
%we have that 
 %is at least 
\begin{eqnarray*}
 ||x_i(s) - x_j(s)|| &\ge& ||x_i(t_k) - x_j(t_k)||- 4\mu\dmax \\
 &\geq& d_1 - (d_1-d_0)=d_0
\end{eqnarray*}
for every $s\in [t_h^j, t_h^j +\dmax]$, 
where we have used \eqref{eq:relation-dmax-mu-d1-d0}. As a consequence, the first part of Condition \eqref{eq:cond-reply} also holds,
implying that 
$b_{ij}(s)=1$ for every $s\in [t_h^j, t_h^j +\dmax]$. 
We get again
$$\int_{\utij}^{\otij}a_{ji}(s)ds\geq \min\left(\frac{\mu}{n \Delta(0)},1\right)\int_{t_h^j}^{t_h^j+\dmin}b_{ij}(s)ds =\eps,$$
which 
%\samf{terminates the proof of Assumption \ref{as:pairwise_reciprocity}}
%in that case. 
\julf{establishes that Assumption \ref{as:pairwise_reciprocity} holds in that case.}\\

\jmh{Suppose now that $a_{ij}(t)>0$ because Condition (\ref{eq:cond-reply}) is satisfied at $t$ for $i,j$. Then one can easily verify that Condition (\ref{eq:cond-initiate}) was satisfied for $j,i$ for all $s\in [t_j^k,t_j^k+\dmin]$ for some $t_j^k \in [t-\dmax,t]$, and an argument symmetric to that we have developed above shows that Assumption \ref{as:pairwise_reciprocity} also holds.}\\

Since the weights $a_{ij}(t)$ are upper-bounded, applying Theorem \ref{th:pairwise_reciprocity} (or more precisely its direct extension to $\Re^2$, see Remark \ref{rem:about_mainresults})  shows that (i) the system converges: $x_i^*=\lim_{t\to\infty} x_i(t)$ exists for every $i$, and (ii) $x_i^*\neq x_j^*$ only if $\int_0^\infty a_{ij}(t) dt < \infty$.

To conclude the proof, suppose, to obtain a contradiction, that $\lim_{t\to\infty }\Delta(t) > d_1$, and thus that $||x_i^*-x_j^*|| > d_1$ for some $i,j$. 
The continuity of $x$ implies that $||x_i(t) - x_j(t)|| > d_1$ for all $t>s$ for some $s$, and in particular for all $t_k^i>s$. It follows then from the \initiate~rule (\ref{eq:cond-initiate}) that $b_{ij}(t)$ would be set to 1 on infinitely many time intervals of length at least $\dmin$. Besides, it follows from \eqref{eq:bound_aij/bij} that $a_{ij}$ and $b_{ij}$ remain within a bounded ratio, so that we would have $\int_0^\infty a_{ij}(t) dt = \infty$. However, we have seen that  $x_i^*\neq x_j^*$ only if $\int_0^\infty a_{ij}(t) dt < \infty$, so there should hold $x_i^*=x_j^*$, in contradiction with our hypothesis. We have thus $\lim_{t\to\infty }\Delta(t) \leq d_1$.
\end{proof}

\vspace{.1cm}

Note that it is actually possible to have the robots converging to final positions within distances smaller than the $d_1$ from Proposition \ref{prop:application} from each other. This can be achieved by decreasing their maximal speed $\mu$ and the distance $d_1$ when approaching convergence. Such more evolved control laws are however out of the scope of this section, where our goal was to demonstrate the use of our results from Section \ref{sec:pb-statement}.

%\comjh{j'ai rajouté le commentaire ci-dessus au cas ou on trouve que d1 serait pas terrible comme distance de convergence}

\section{Proofs}
% of Theorem~\ref{th:consensus-under-persistent-connectivity-and-upper-bound-on-cummulative-weights-and-reciprocity}}}
\label{sec:proof_sketch}

%%%%%%%%%%%%%%%%%%%%%%%%%%%%%%%%%%%%%%%%%%%%%%%%%%%%%%%
%%%%%%%%%%%%%%%%%%%%%%%%%%%%%%%%%%%%%%%%%%%%%%%%%%%%%%%
\sam{\subsection{Proof of Theorem \ref{th:consensus-under-persistent-connectivity-and-upper-bound-on-cummulative-weights-and-reciprocity}}
\label{sec:proof_general}}
%%%%%%%%%%%%%%%%%%%%%%%%%%%%%%%%%%%%%%%%%%%%%%%%%%%%%%%

Before we prove Theorem~\ref{th:consensus-under-persistent-connectivity-and-upper-bound-on-cummulative-weights-and-reciprocity}, we provide several intermediate results. \julien{Our proof uses the following result on cut-balance discrete-time consensus systems.} \sam{This result is a special case of Theorem~1 in~\cite{touri2014endogenous} restricted to deterministic systems.}
\vspace{.2cm}
%\addtocounter{theorem}{1}

\begin{theorem}\label{thm:DT_result}
Let $y:\N\to \Re^n$ be a solution to
\begin{equation}\label{eq:iter_DT}
y_i(p+1) = \sum_{j=1}^n b_{ij}(p) y_j(p),
\end{equation} 
where $b_{ij}(p)\geq 0$ and $\sum_{j=1}^n b_{ij}(p) =1$.
Suppose that the following assumptions hold:\\
%\vspace{.2cm}

a) \emph{Lower bound on diagonal coefficients:} There exists a $\beta >0$ such that $b_{ii}(p) \geq \beta$ for all $i,p$. 
\vspace{.2cm}

b) \emph{Cut balance:} There exists a $K'>0$ such that for every $p$  and non-empty proper subset $S$ of $\NN$, there holds
\begin{equation}\label{eq:cut_balance_assumption_DT}
\sum_{i\in S, j\not \in S} b_{ij}(p) \leq K' \sum_{i\in S, j\not \in S}b_{ji}(p).
\end{equation}
\vspace{.2cm}

Then, $y^*_i = \lim_{p\to \infty} y_i(p)$ exists for every $i$. Moreover, let $G' = (\NN,E')$ be a directed graph where $(j,i)\in E'$ if $\sum_{p=0}^\infty b_{ij}(p) = +\infty$. %Then every weakly connected component of $G$ is strongly connected [remind what it is??], and there holds $y_i^*= y_j^*$ if $i$ and $j$ belong to the same connected component, that is, there is a directd path from $i$ to $j$ (or equivalently from $j$ to $i$).
There is a path from $i$ to $j$ in $G'$ if and only if there is a path from $j$ to $i$, and in that case there holds  $y_i^* = y_j^*$.
\end{theorem}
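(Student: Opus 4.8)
The plan is to combine an elementary monotonicity argument, a purely combinatorial consequence of the summed cut-balance condition, and an analytic Lyapunov argument based on an absolute probability sequence.

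\textbf{Monotone envelope.} First I would note that since each matrix $B(p)=(b_{ij}(p))$ is nonnegative and row-stochastic, the update $y(p+1)=B(p)y(p)$ places each coordinate in the convex hull of the previous values. Hence $M(p):=\max_i y_i(p)$ is non-increasing, $m(p):=\min_i y_i(p)$ is non-decreasing, the whole trajectory stays in $[m(0),M(0)]$, and both envelopes converge. This reduces the theorem to two claims: (i) each individual $y_i(p)$ converges, and (ii) the persistent graph $G'$ has the stated symmetric-reachability structure with local consensus inside components.

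\textbf{Combinatorial part (structure of $G'$).} Summing \eqref{eq:cut_balance_assumption_DT} over all $p$ gives, for every nonempty proper $S$, that $\sum_p \sum_{i\in S,j\notin S} b_{ij}(p)$ is finite whenever $\sum_p \sum_{i\in S,j\notin S} b_{ji}(p)$ is; in graph terms, if no persistent edge leaves $S$ then no persistent edge enters $S$. Applying this with $S=R(i)$, the set of vertices reachable from $i$ along persistent edges, no persistent edge leaves $R(i)$ by definition of reachability, so none enters it either. If there were a path from $i$ to $j$ but not from $j$ to $i$, then taking $S=R(j)$ (which contains $j$ but not $i$), a persistent path from $i$ to $j$ would have to enter $R(j)$, a contradiction. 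This proves the "path from $i$ to $j$ iff path from $j$ to $i$" statement and shows the strongly connected components of $G'$ carry no crossing persistent edges.

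\textbf{Analytic core (convergence and agreement).} For the convergence and the equality of limits inside a component, I would invoke an absolute probability sequence $\{\pi(p)\}$ for $\{B(p)\}$, that is, a sequence of stochastic vectors with $\pi_j(p)=\sum_i \pi_i(p+1) b_{ij}(p)$, whose existence is guaranteed by Kolmogorov's theorem. Then the weighted mean $\sum_i \pi_i(p) y_i(p)$ is constant in $p$, and by Jensen's inequality the quadratic $V(p)=\sum_i \pi_i(p) y_i(p)^2$ is non-increasing; its telescoping decrements, together with $b_{ii}(p)\geq\beta$, bound $\sum_p \sum_{i,j}\pi_i(p+1) b_{ij}(p)(y_i(p)-y_j(p))^2 <\infty$. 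Using a uniform positive lower bound on the entries $\pi_i(p)$ within a component, this summability forces $y_i(p)-y_j(p)\to 0$ along every persistent edge; chaining through a strongly connected component and combining with the convergence of the monotone envelopes yields a common limit for all agents of that component, hence convergence of every $y_i$.

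\textbf{Main obstacle.} The delicate step is establishing the uniform positive lower bound on the absolute probability sequence $\pi_i(p)$ on each persistent component; this is exactly where cut-balance, rather than mere connectivity, is indispensable, and it is the technically heaviest part of the argument, mirroring the continuous-time analysis of \cite{Hendrickx2011}. An alternative that avoids $\pi$ altogether is a Hendrickx--Tsitsiklis-style peeling of the extremal agents, repeatedly isolating those whose $\limsup$ equals $M^*$, but there the obstacle simply migrates to controlling the oscillations of these top agents, again resolved through cut-balance.
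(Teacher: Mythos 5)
Your proposal has genuine gaps at its analytic core, and they sit exactly where the theorem's real difficulty lies. First, the passage from the summability $\sum_p\sum_{i,j}\pi_i(p+1)\,b_{ij}(p)\,(y_i(p)-y_j(p))^2<\infty$ to ``$y_i(p)-y_j(p)\to 0$ along every persistent edge'' is not valid: combined with $\sum_p b_{ij}(p)=+\infty$ it only yields $\liminf_p |y_i(p)-y_j(p)|=0$, and the subsequences realizing this liminf for different edges of a strongly connected component need not be compatible, so the ``chaining'' step fails; nor do the monotone envelopes help, since $M(p)$ and $m(p)$ are global quantities and the per-component extrema are not monotone (non-persistent cross-component weights still move individual agents). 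Upgrading these liminf statements to convergence of each $y_i(p)$ is the hard part of every cut-balance argument, not a corollary of it. Second, the uniform positive lower bound on the absolute probability sequence $\pi_i(p)$, which you yourself flag as the heaviest step, is precisely where the distinctive feature of this theorem bites: as the paper emphasizes immediately after the statement, there is \emph{no} uniform lower bound on the positive entries $b_{ij}(p)$ here, and establishing positivity of $\pi$ under cut-balance plus strong feedback is essentially the content of the Touri--Nedi\'c theory you would be re-deriving. A proposal that defers both of these steps has not proved the theorem; the combinatorial part on the structure of $G'$ is, by contrast, correct and standard.

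For comparison, the paper does not reprove this statement at all: it invokes it as the deterministic special case of Theorem~1 of \cite{touri2014endogenous}. (The source also contains a self-contained argument of a quite different flavor: the matrices satisfying stochasticity, the diagonal bound $\beta$ and cut-balance form a polytope with finitely many vertices; each $B(p)$ is written as the expectation of an independent random matrix supported on those vertices, so that every \emph{realization} of the random product does enjoy a uniform lower bound on its positive entries and an older result with that hypothesis applies; the second Borel--Cantelli lemma and the bounded convergence theorem then transfer convergence of the realizations back to $B[p:0]=\E(R[p:0])$.) If you wish to complete a proof along your lines, you must either carry out the lower bound on $\pi$ and the liminf-to-limit upgrade in full, or adopt a reduction of this randomized kind that restores a uniform lower bound on positive entries.
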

%\comjh{Problem here, I think that this result is actually more or less a particular case of the result from \cite{touri2014endogenous}.}
\vspace{.2cm}
Unlike certain results pre-dating those in \cite{touri2014endogenous} (e.g. Theorem 2 in \cite{hendrickx2011new}), Theorem \ref{thm:DT_result} does not require the existence of a uniform lower bound on the positive coefficients $b_{ij}$, that is, the existence of a $\beta'$ such that $b_{ij}(p)>0\Rightarrow b_{ij}(p)\geq \beta'$. This seemingly minor difference is actually essential for our purpose, as there is in general no such uniform lower bound in the context of our proof.

To apply Theorem~\ref{thm:DT_result}, we focus on the values taken by the states at times $t_p$.
Remember that the sequence of times $t_p$ defines the intervals over which the integral reciprocity is satisfied.
\vspace{.2cm}
\begin{lemma}\label{l:discrete-time-system}
%Consider a trajectory to system~(\ref{eq:sys-integral}) defined on $\R$ with given weights $a_{ij}(t)$ and sampling times $t_p$.
The sequence of states $(x(t_p))$ can be written as the trajectory of the discrete-time consensus system obtained by sampling~\eqref{eq:sys-integral}
\begin{equation}\label{eq:discrete_process_phi}
x_i(t_{p+1}) = \ssum_{j \in \NN} \phi_{ij}(p) \cdot x_j(t_p),
\end{equation}
where the weights $\phi_{ij}(p)$ are non-negative and satisfy $\sum_{j \in \NN} \phi_{ij}(p)= 1$. This sampled system always exists and is unique for given weights $a_{ij}(t)$ and sampling times $t_p$. The weights  $\phi_{ij}(p)$ are independent of states $x(t)$. 

In particular, if $x_j(t_p) = 1$ for $j\in S$ and $x_k(t_p) = 0$ for $k\notin S$, for some $S \subseteq \NN$, there holds
\begin{equation}\label{eq:phi_artificial_states}
\ssum_{j\in S} \phi_{ij}(p) =  x_i(t_{p+1}).
\end{equation}
%we can always compute  weights $\phi_{ij}(p)$ by assuming artificial states $x_j(t_p) = 1$ for $j\in S$ and $x_k(t_p) = 0$ for $k\notin S$ and $S \subseteq \NN$, then
%\begin{equation}\label{eq:phi_artificial_states}
%\ssum_{j\in S} \phi_{ij}(p) =  x_i(t_{p+1}).
%\end{equation}
\end{lemma}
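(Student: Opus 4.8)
The plan is to exploit the linearity of system \eqref{eq:sys-integral} and to identify the $\phi_{ij}(p)$ with the entries of its state-transition matrix over the interval $[t_p,t_{p+1}]$. First I would rewrite \eqref{eq:def_sys_derivative} in matrix form as $\dot x(t) = -L(t)x(t)$, where $L(t)$ is the time-varying Laplacian defined by $[L(t)]_{ij} = -a_{ij}(t)$ for $i\neq j$ and $[L(t)]_{ii} = \sum_{k\neq i}a_{ik}(t)$. Since each $a_{ij}$ is measurable and summable on bounded intervals, the Caratheodory theory for linear time-varying systems (the same invoked just after \eqref{eq:sys-integral}) guarantees a unique, absolutely continuous state-transition matrix $\Phi(t,s)$ satisfying $\frac{\partial}{\partial t}\Phi(t,s) = -L(t)\Phi(t,s)$, $\Phi(s,s)=I$, with $x(t)=\Phi(t,s)x(s)$ for all $t\geq s\geq 0$. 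I then set $\phi_{ij}(p) := [\Phi(t_{p+1},t_p)]_{ij}$; the relation $x(t_{p+1}) = \Phi(t_{p+1},t_p)x(t_p)$ is exactly \eqref{eq:discrete_process_phi}. Because the system is linear (the $a_{ij}$ do not depend on $x$), $\Phi(t_{p+1},t_p)$ is determined solely by the coefficients $a_{ij}(\cdot)$ on $[t_p,t_{p+1}]$, hence is unique and independent of the state, which yields the uniqueness and state-independence claims.

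Next I would establish the two structural properties of the matrix $\phi(p)$. For the row sums, I note that $L(t)\1 = 0$ for almost all $t$, so the constant trajectory $x\equiv \1$ solves the system; uniqueness then forces $\Phi(t,s)\1 = \1$, i.e. $\sum_{j\in\NN}\phi_{ij}(p) = 1$. For non-negativity, I would use that $-L(t)$ is Metzler, its off-diagonal entries $a_{ij}(t)$ being non-negative, so the flow is order-preserving. Concretely, I would show that if $x(s)\geq 0$ componentwise then $x(t)\geq 0$ for all $t\geq s$, by arguing that $t\mapsto \min_i x_i(t)$ is non-decreasing: whenever $x_i(t)$ attains the minimum, $\dot x_i = \sum_j a_{ij}(x_j - x_i)\geq 0$. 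Applying this with the non-negative initial condition $x(s)=e_j$ (the $j$-th unit vector) gives $[\Phi(t,s)]_{ij}\geq 0$, hence $\phi_{ij}(p)\geq 0$.

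Finally, the artificial-states identity \eqref{eq:phi_artificial_states} is an immediate specialization: taking $x(t_p)$ to be the indicator vector of $S$ (value $1$ on $S$ and $0$ elsewhere) in \eqref{eq:discrete_process_phi} gives $x_i(t_{p+1}) = \sum_{j\in S}\phi_{ij}(p)$. The hard part will be the non-negativity claim: the existence, uniqueness, state-independence and row-stochasticity of $\phi(p)$ are standard consequences of the linear consensus structure, but the order-preserving (Metzler) property must be justified with some care, since the trajectory is only absolutely continuous rather than $C^1$, so the monotonicity of the minimal component calls for a Dini-derivative argument rather than a naive pointwise differentiation.
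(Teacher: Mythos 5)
Your proposal is correct and follows essentially the same route as the paper: both identify $\phi_{ij}(p)$ with the entries of the state-transition (fundamental) matrix $\Phi$ over $[t_p,t_{p+1}]$, derive non-negativity from the order-preserving property of the flow applied to unit-vector (i.e.\ $S=\{j\}$) initial conditions, and obtain \eqref{eq:phi_artificial_states} by direct specialization. The only differences are minor: the paper establishes $\sum_{j}\phi_{ij}(p)=1$ via the Peano--Baker series together with $L(s)\1=0$, whereas you invoke uniqueness of the constant solution $\1$ (slightly more elementary), and you explicitly justify the preservation of non-negativity with a Dini-derivative argument where the paper simply asserts it.
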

\vspace{.2cm}
\begin{remark}\label{rem:artificial}
The equality \eqref{eq:phi_artificial_states} provides a way of computing or bounding certain sums of the weights $\phi_{ij}(p)$ by considering the evolution of the systems starting from \quotes{artificial} states, where $x_j(t_p) = 1$ for some agents and $x_k(t_p) = 0$ for the others. 

Note that these artificial states are only a formal tool to compute weights $\phi_{ij}(p)$, and their use does not result in any loss of generality.
\end{remark}
\vspace{.2cm}

\begin{proof}
%[of Lemma~\ref{l:discrete-time-system}]
Denote by $\Phi(t,T)$ the fundamental matrix of the linear dynamics~(\ref{eq:sys-integral}) which is uniquely defined~\cite{Filippov1988} by
$$
x(T) = \Phi(t,T) x(t).
$$
%$D(s)=diag(d_i)$ the degree diagonal matrix of $A(s)$ with $d_i(s) = \sum_{j\in \NN} a_{ij}$. 
We define $\phi_{ij}(p)$ as the $ij$-th coefficient of matrix $\Phi(t_p,t_{p+1})$. So, the $\phi_{ij}(p)$ are unique and equation~\eqref{eq:discrete_process_phi} is satisfied. Moreover, for given weights $a_{ij}(t)$, the matrix $\Phi(t,T)$ is independent of the state $x(t)$ and so are the weights $\phi_{ij}(p)$. So if we assume artificial states $x_j(t_p) = 1$ for $j\in S$ and $x_k(t_p) = 0$ for $k\notin S$, we obtain \eqref{eq:phi_artificial_states} from equation~\eqref{eq:discrete_process_phi}. And since system~\eqref{eq:sys-integral} preserves the nonnegativity of the states,
% property that if all states are non-negative, they remain non-negative~\cite{??}, 
it follows from equation~\eqref{eq:phi_artificial_states} applied to $S = \{j\}$ that $\phi_{ij}(p)\geq 0$ for every $i,j,p$.

Finally, we can use the Peano-Baker formula~\cite{brockett1970finite}
%formula~\cite{??} ADD REFERENCE LATER
to show that $\sum_{j \in \NN} \phi_{ij}(p)= 1$ : the formula gives $\Phi(t,T)$ as the limit of a recursive series
$$
\Phi(t,T) = \lim_{n\rightarrow \infty} M_n(T)
$$
\vspace{-0.2cm}
with
\vspace{-0.1cm}
$$
M_0(\tau) = I \text{ and } M_{n+1}(\tau) = I-\int_t^\tau L(s) M_n(s)ds,
$$
where $I$ is the identity matrix and $L(s)$ the Laplacian matrix of $A(s) = (a_{ij}(s))$, \ie with diagonal elements equal to $\sum_{j\in \NN} a_{ij}(s)$ and off-diagonal elements equal to $-a_{ij}(s)$.
Since $L\cdot \un = 0$ with $\un$ the vector of all ones, we have from the recursive equation that $M_n\cdot 1 = 1$ and by continuity, $\Phi(t,T) \cdot 1 = 1$, thus
$\sum_{j \in \NN} \phi_{ij}(p)= 1$. 
\end{proof}

%It can be shown that for given $a_{ij}(t)$, there exist unique weights
%$\phi_{ij}(p)$ exist and are unique, 
%which are directly determined by the values of the coefficients $a_{ij}(t)$ over $[t_p,t_{p+1}]$,
%are nonnegative, and satisfy $\ssum_{j \in \NN} \phi_{ij}(p)= 1$. (It can be shown that such $\phi_{ij}(p)$ exists and are unique). 

To obtain more insight on the discrete-time weights $\phi_{ij}$, we give the next proposition which bounds the discrete-time weights $\phi_{ij}$ using the continuous-time weights $a_{ij}$. For concision, we will omit the explicit reference to time in $a_{ij}$ when the context prevents any ambiguity. 
%the sequel we will
\vspace{.2cm}

\begin{proposition}\label{prop:bound-discrete-time-weights-vs-continuous-time-weights}
Under the uniform bound Assumption~\ref{as:upper-bound-on-weights}, we have for all proper subset of agents $S$ and all $p\ge 0$,
$$
G \cdot \ssum_{\samf{\iSjnotS}}
%\begin{minipage}{0.8cm}{\begin{center}\scriptsize{$i\in S$ \\ $j\notin S$}\end{center}}\end{minipage}} 
\int_{t_p}^{t_{p+1}} a_{ij}(t)dt \le \ssum_{\samf{\iSjnotS}} \phi_{ij}(p) \le n \cdot \ssum_{\samf{\iSjnotS}} \int_{t_p}^{t_{p+1}} a_{ij}(t)dt,
$$
with $G = \exp(-2nM)/n$.
\end{proposition}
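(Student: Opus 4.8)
The goal is to bound the off-diagonal block sums of the discrete transition matrix $\Phi(t_p,t_{p+1})$ in terms of the continuous integral weights $\sum_{i\in S,j\notin S}\int_{t_p}^{t_{p+1}}a_{ij}(t)dt$. The plan is to exploit the artificial-state device from Lemma~\ref{l:discrete-time-system} and Remark~\ref{rem:artificial}: setting $x_j(t_p)=1$ for $j\in S$ and $x_k(t_p)=0$ for $k\notin S$, the quantity I want to control, $\sum_{i\in S,j\notin S}\phi_{ij}(p)$, equals $\sum_{i\in S}(1-x_i(t_{p+1}))$ by \eqref{eq:phi_artificial_states} (since the rows of $\Phi$ sum to $1$). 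Thus the block sum measures exactly how much the ``high'' agents in $S$ have dropped below $1$ over $[t_p,t_{p+1}]$, and I can compute this drop by integrating the dynamics \eqref{eq:sys-integral} directly.

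\medskip

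\textbf{Upper bound.}\quad First I would write, for each $i\in S$,
$$
1 - x_i(t_{p+1}) = -\int_{t_p}^{t_{p+1}}\dot x_i(s)\,ds = \int_{t_p}^{t_{p+1}}\ssum_{j}a_{ij}(s)\big(x_i(s)-x_j(s)\big)\,ds.
$$
Since the states stay in $[0,1]$ (the system preserves the order interval $[0,1]$, as $\un$ and $0$ are equilibria and the dynamics is a convex-combination flow), we have $x_i(s)-x_j(s)\le 1$, and the difference is nonzero only when $j\notin S$ is relevant; more carefully, summing over $i\in S$ and bounding $x_i(s)-x_j(s)\le 1$ gives
$$
\ssum_{i\in S}\big(1-x_i(t_{p+1})\big) \le \ssum_{\iSjnotS}\int_{t_p}^{t_{p+1}}a_{ij}(s)\,ds,
$$
because for $i,j$ both in $S$ the contributions are dominated by the $j\notin S$ terms once one tracks signs, but the clean route is simply $x_i(s)-x_j(s)\le 1$ for all pairs and noting the internal $S$-to-$S$ terms cancel in sign when summed symmetrically. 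This yields the factor-$n$ upper bound (in fact a factor $1$ for this particular block sum, with the $n$ providing slack), giving $\sum_{i\in S,j\notin S}\phi_{ij}(p)\le n\sum_{i\in S,j\notin S}\int_{t_p}^{t_{p+1}}a_{ij}(t)dt$.

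\medskip

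\textbf{Lower bound.}\quad This is the harder direction and where I expect the main obstacle. I again use the artificial states and the identity $\sum_{i\in S,j\notin S}\phi_{ij}(p)=\sum_{i\in S}(1-x_i(t_{p+1}))$. The difficulty is that a single agent $i\in S$ may drop far below $1$, so $x_i(s)-x_j(s)$ can be much smaller than $1$ over most of the interval, making the naive lower bound $\int a_{ij}\,ds$ too large compared to the realized drop. The key is to lower-bound how big the gaps $x_i(s)-x_j(s)$ (for $i\in S$, $j\notin S$) remain throughout $[t_p,t_{p+1}]$. Here Assumption~\ref{as:upper-bound-on-weights} is essential: since $\int_{t_p}^{t_{p+1}}a_{ij}(t)dt\le M$ for every pair, the total ``flow capacity'' out of any agent over the interval is at most $(n-1)M\le nM$, which controls how much any state can move. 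Concretely, I would show via a Gr\"onwall-type argument that each $x_i(s)$ for $i\in S$ stays above some value bounded below in terms of $\exp(-2nM)$: the ``high'' agents cannot be pulled down faster than the aggregate weight $\int a_{ij}$ allows, so $x_i(s)\ge e^{-2nM}$ throughout, and symmetrically $x_j(s)\le 1-e^{-2nM}$ for $j\notin S$, keeping the gap $x_i(s)-x_j(s)\ge$ some positive multiple of $e^{-2nM}$. Integrating then gives
$$
\ssum_{i\in S}\big(1-x_i(t_{p+1})\big)\ge \frac{\exp(-2nM)}{n}\ssum_{\iSjnotS}\int_{t_p}^{t_{p+1}}a_{ij}(t)dt = G\cdot \ssum_{\iSjnotS}\int_{t_p}^{t_{p+1}}a_{ij}(t)dt.
$$

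\medskip

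The crux is therefore the exponential lower bound $x_i(s)\ge e^{-2nM}$ on the high states under the capacity constraint $M$: this is exactly why Assumption~\ref{as:upper-bound-on-weights} appears, and why the factor $\exp(-2nM)/n$ shows up in $G$. I would make this rigorous by a differential inequality of the form $\dot x_i \ge -\big(\sum_j a_{ij}\big)x_i$ (dropping the nonnegative pull from agents below), integrating to $x_i(t_{p+1})\ge x_i(t_p)\exp\!\big(-\int_{t_p}^{t_{p+1}}\sum_j a_{ij}\big)\ge \exp(-nM)$, and similarly for the complementary bound on $j\notin S$, with the factor $2$ in the exponent and the $1/n$ absorbing the worst-case distribution of the weights across the $|S|\cdot(n-|S|)$ pairs. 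The routine calculations are the Gr\"onwall estimate and the bookkeeping of how the per-pair drop aggregates, but the conceptual content is entirely the artificial-state identity together with the capacity bound forcing the gaps to stay uniformly positive.
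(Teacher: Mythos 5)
Your setup is the right one --- the artificial-state identity $\sum_{i\in S,j\notin S}\phi_{ij}(p)=\sum_{i\in S}(1-x_i(t_{p+1}))$ (the paper uses the complementary initialization, with $S$ at $0$, giving $\sum_{i\in S}x_i(t_{p+1})$, which is equivalent), and you correctly identify that Assumption~\ref{as:upper-bound-on-weights} enters through Gr\"onwall estimates of the form $e^{-nM}$. But both of your key steps fail as stated. For the upper bound, the internal $S$-to-$S$ terms do \emph{not} cancel: $\sum_{i,j\in S}a_{ij}(x_i-x_j)$ vanishes only under symmetry of the weights, which is not assumed, and your parenthetical claim that the factor is really $1$ is false. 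A chain inside $S$ (one agent dropped by $\epsilon$ through a cross-interaction of integral $\epsilon$, then successively dragging down the other $|S|-1$ agents through large internal weights) produces a total drop close to $(n-1)\epsilon$ against a cross-integral of only $\epsilon$. The paper's fix is to track the extremal value of $S$ (its maximum in the paper's convention, which would be the minimum $\xm_S$ in yours): its evolution is governed \emph{only} by cross terms since internal interactions cannot push the extremum outward, Gr\"onwall gives $1-\xm_S(t_{p+1})\le\int\sum_{i\in S,j\notin S}a_{ij}$, and the factor $n$ comes from $\sum_{i\in S}(1-x_i)\le n(1-\xm_S)$.

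For the lower bound the gap is more serious. From $x_i(s)\ge e^{-2nM}$ ($i\in S$) and $x_j(s)\le 1-e^{-2nM}$ ($j\notin S$) you can only conclude $x_i(s)-x_j(s)\ge 2e^{-2nM}-1$, which is \emph{negative} as soon as $nM>\tfrac{1}{2}\ln 2$; and the gap genuinely can turn negative (an agent of $S$ strongly attracted to one outside agent can cross below another outside agent that has risen). So ``the gaps stay uniformly positive'' is not available, and even if it were, the possibly negative internal contribution $\int\sum_{i,j\in S}a_{ij}(x_i-x_j)$ (agents of $S$ pulling each other back up) would still prevent you from converting a per-pair gap bound into a bound on the net drop. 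The paper's argument is structured precisely to avoid bounding gaps: it shows only that the agents \emph{outside} $S$ stay within a factor $e^{-nM}$ of their initial value, then singles out the one agent $h\in S$ receiving at least a $1/n$ fraction of the total cross influence, and runs a Gr\"onwall inequality on $x_h$ alone with source term $e^{-nM}\sum_{j\notin S}a_{hj}$ and decay $\exp(-\int\sum_k a_{hk})\ge e^{-nM}$ --- using $x_h\ge0$ rather than any sign condition on $x_j-x_h$ --- and finally bounds the whole block sum from below by the single row sum $\sum_{j\notin S}\phi_{hj}=x_h(t_{p+1})$. That selection of a single well-influenced agent is the missing idea, and it is where both the $1/n$ and the second factor $e^{-nM}$ in $G$ actually come from.
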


\vspace{.2cm}
\begin{proof}
%[of Proposition~\ref{prop:bound-discrete-time-weights-vs-continuous-time-weights}]
Let $p\in \N$ and $S$ a proper subset of $\NN$. We assume that 
%\text{for all }
\begin{equation}\label{state}
\forall i \in S,  x_i(t_p) = 0 \text{ and } \forall j \in S,  x_j(t_p) = 1,
\end{equation}
as suggested in Remark \ref{rem:artificial}.
% Lemma~\ref{l:discrete-time-system}.

We first show the left inequality. We show that starting from state~(\ref{state}) at time $t_p$ no agent $j \notin S$ can be arbitrarily close to $0$ at time $t_{p+1}$. We have for all $\tau \in [t_p, t_{p+1}]$,
\begin{eqnarray*}
x_j(\tau) &=& x_j(t_{p}) + \int_{t_p}^{\tau} \ssum_{k\in \NN} a_{jk}(t) \cdot (x_k(t)-x_j(t))dt \\
&\ge& x_j(t_{p}) - \int_{t_p}^{\tau} \ssum_{k\in \NN}a_{jk}(t) \cdot x_j(t) dt,
\end{eqnarray*}
where we used $x_k(t)\ge  0, k\in \NN$. We use Gronwall's inequality~\samf{\cite{ames1997inequalities}}
%\julf{\comjh{Ideally say where in the book}}
and Assumption~\ref{as:upper-bound-on-weights} (upper bound on interactions on each $[t_p,t_{p+1}]$) to obtain
\begin{equation}\label{eq:lower-bound-on-xj}
j\not\in S\Rightarrow x_j(\tau) \ge e^{-nM}, \forall \tau \in [t_p,t_{p+1}].
\end{equation}
We will use the bound~(\ref{eq:lower-bound-on-xj}) to establish that, due to attraction from agents not in $S$, the states $x_i(t_{p+1})$ of the agents $ i\in S $ at time $t_{p+1}$ are all at least a certain positive distance from $0$.
%We rely on bound~(\ref{eq:lower-bound-on-xj}) to prove that, due to attraction from agents not in $S$, all states $x_i(t_{p+1})$, for $ i\in S $, cannot be arbitrarily close to $0$ at time $t_{p+1}$.
\vspace{.2cm}

Let now $h \in S$ be such that
$$\ssum_{j\notin S} \int_{t_p}^{t_{p+1}} a_{hj}(t)dt = \max_{i\in S} \ssum_{j\notin S} \int_{t_p}^{t_{p+1}} a_{ij}(t)dt,$$
\ie agent $h$ is the element in $S$ receiving the highest influence from the rest of the group. There holds
\begin{equation}\label{eq:lower_bound_influence_h}
\ssum_{j\notin S} \int_{t_p}^{t_{p+1}} a_{hj}(t)dt \geq \frac{1}{n} \sum_{i\in S}\ssum_{j\notin S} \int_{t_p}^{t_{p+1}} a_{ij}(t)dt.
\end{equation}
Using the non-negativity $x_i \ge 0$ for $i \in S$ and the lower bound (\ref{eq:lower-bound-on-xj}) on $x_j$ for $j \notin S$, we have for all $\tau \in [t_p, t_{p+1}]$,
\begin{small}
\begin{equation*}
 \begin{array}{l}
  x_h(\tau) =\\
  x_h(t_{p}) +  \int_{t_p}^{\tau} \ssum_{j\notin S} a_{hj} (x_j-x_h)dt + \int_{t_p}^{\tau} \ssum_{k \in \NN} a_{hk} (x_k-x_h) dt\\
 \geq  x_h(t_{p}) + \int_{t_p}^{\tau} \ssum_{j\notin S} a_{hj} x_j dt
- \int_{t_p}^{\tau} \ssum_{k \in \NN} a_{hk} x_h dt.\\
 \geq   e^{-nM} \int_{t_p}^{\tau} \ssum_{j\notin S} a_{hj} dt
- \int_{t_p}^{\tau} \ssum_{k \in \NN} a_{hk} x_h dt,
 \end{array}
\end{equation*}\end{small}
%\begin{eqnarray*}
%%$$
%x_h(\tau) &=& x_h(t_{p}) +  \int_{t_p}^{\tau} \ssum_{j\notin S} a_{hj} x_j
%- \int_{t_p}^{\tau} \ssum_{k \in \NN} a_{hk} x_h.\\
% &=& x_h(t_{p}) +  e^{-nM} \int_{t_p}^{\tau} \ssum_{j\notin S} a_{hj}
%- \int_{t_p}^{\tau} \ssum_{k \in \NN} a_{hk} x_h.
%%$$
%\end{eqnarray*}
where we have also used $x_h(t_p) = 0$. It follows then from Gronwall's inequality that %%%% Say which one in the paper version that 
%We again use Gronwall's inequality and that $x_h(t_p) = 0$ to obtain
\begin{equation}
x_h(t_{p+1}) \ge 
%e^{-\int_{t_p}^{t_{p+1}} \sum_{k\in \NN} a_{hk}} x_h(t_p)
e^{-nM} \int_{t_p}^{t_{p+1}}  e^{-\int_{\tau}^{t_{p+1}} \sum_{k\in \NN} a_{hk}ds }  \ssum_{j\notin S} a_{hj}d\tau.
\end{equation}
The expression inside the exponential can be bounded
%We now bound the term in the exponential 
using Assumption~\ref{as:upper-bound-on-weights} (upper bound) together with \eqref{eq:lower_bound_influence_h}. We have then
% to write
\begin{equation}
x_h(t_{p+1}) \ge 
%e^{-\int_{t_p}^{t_{p+1}} \sum_{k\in \NN} a_{hk}} x_h(t_p)
\frac{1}{n}e^{-2nM} \ssum_{i\in S,j\notin S} \int_{t_p}^{t_{p+1}}  a_{ij}dt.
\end{equation}
Moreover, $\phi_{ij}(p)\ge 0$ and equation~\eqref{eq:phi_artificial_states} yield
$$
\ssum_{i\in S}\ssum_{j\notin S}\phi_{ij}(p) \ge \ssum_{j\notin S}\phi_{hj}(p) = x_h(t_{p+1}).
$$
We conclude the first part of the proof combining the two previous equations.
\vspace{.4cm}

We now turn to the second inequality. For $t\in [t_p, t_{p+1}]$, let $\xM_S(t) = \max_{i\in S} x_i(t)$ be the largest value in $x$ at time $t$, and $m(t)$ the index of (one of) the agents holding that largest value at time $t$. It was shown in~\cite[Proposition 2]{Hendrickx2011} that

$$
\xM_S(t) = \xM_S(t_p) + \int_{t_p}^{t} \ssum_{k \in \NN} a_{m(\tau)k} (x_k-\xM_S)d\tau.
$$
%where $m(\tau) \in S$ is chosen such that $x_{m(\tau)}(\tau) = \xM_S(\tau)$. The last equality has been shown in~\cite[Proposition 2]{hendrickx2011new}. 
Notice that the choice of state~(\ref{state}) implies $\xM_S(t_p)=0$. Since $x_j \le 1$ for $j \notin S$ and $x_i \le \xM_S \le 1$ for $i \in S$, we have
\begin{eqnarray*}
\xM_S(t) &\le& \int_{t_p}^{t} \ssum_{j \notin S} a_{m(\tau)j} (1-\xM_S)d\tau \\
&\le& \int_{t_p}^{t}  \ssum_{i\in S, j \notin S} a_{ij} (1-\xM_S)d\tau.
\end{eqnarray*}
%which rewrites to
%$$
%1- \xM_S(t) &\ge& 1-\xM_S(t_p) -  \int_{t_p}^{t}  \ssum_{i\in S, j \notin S} a_{ij} (1-\xM_S).
%$$
Gronwall's inequality yields then %and Assumption~\ref{as:upper-bound-on-weights} (upper bound) 
%yields
\begin{align}\label{eq:intermediary-bound}
\xM_S(t_{p+1}) &\le 1- \exp\left(-\int_{t_p}^{t_{p+1}}  \sum_{i\in S, j \notin S} a_{ij} dt\right) \nonumber \\
&\le \int_{t_p}^{t_{p+1}}  \ssum_{i\in S, j \notin S} a_{ij} dt,
\end{align}
from which we conclude
%We conclude with
\begin{center}
$
\ssum_{i\in S,j\notin S}\phi_{ij}(p) = \ssum_{i\in S}x_i(t_{p+1}) \le n \xM_S(t_{p+1}).
$
\end{center}
\end{proof}

The previous proposition serves to transpose the cut-balance assumption provided in Theorem~\ref{th:consensus-under-persistent-connectivity-and-upper-bound-on-cummulative-weights-and-reciprocity} to the discrete-time weights $\phi_{ij}(p)$. In particular, we can now show that the weights $\phi_{ij}(p)$ satisfy the condition of Theorem \ref{thm:DT_result}.
% are satisfied, more precisely, we have Lemma~\ref{l:a-and-b} regarding weights $\phi_{ij}(p)$.

\vspace{.2cm}
\begin{lemma}\label{l:a-and-b}
 \sam{Under the non-instantaneous reciprocity Assumption~\ref{as:reciprocity} and the uniform bound Assumption~\ref{as:upper-bound-on-weights}}, the following properties hold :
 \begin{itemize}
 \item[a)] There exists a uniform lower bound $\beta >0$ on diagonal elements: $\phi_{ii}(p) \ge \beta$, for all $p$ and $i$.
  \item[b)] The weights $\phi_{ij}(p)$ satisfy the cut balance assumption (\ref{eq:cut_balance_assumption_DT}) for some $K'$ determined by the constants $K$ and $M$ of Assumptions \ref{as:reciprocity} and \ref{as:upper-bound-on-weights}.
  %not neccessarily the cut-balanced assumption : there exists $Q$ such that for all proper subset $S$ of agents, and all $p\ge 0$, $$\sum_{i \notin S,j \in S} \phi_{ij}(p) \le Q \cdot \sum_{i \notin S,j \in S} \phi_{ji}(p),$$
\end{itemize}
\end{lemma}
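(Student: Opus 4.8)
The plan is to derive both statements directly from Proposition~\ref{prop:bound-discrete-time-weights-vs-continuous-time-weights} together with the artificial-state identity~\eqref{eq:phi_artificial_states} of Lemma~\ref{l:discrete-time-system}, so that the bulk of the analytic work is already available; only the correct application of the Proposition and some index bookkeeping remain.

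For part (a), I would fix $i$ and $p$ and use the artificial state in which a single agent carries the value $1$: set $x_i(t_p)=1$ and $x_k(t_p)=0$ for $k\neq i$. Then~\eqref{eq:phi_artificial_states} with $S=\{i\}$ reads $\phi_{ii}(p)=x_i(t_{p+1})$, so it suffices to bound $x_i(t_{p+1})$ from below. Since the states remain non-negative, $\dot x_i=\sum_k a_{ik}(x_k-x_i)\ge -x_i\sum_k a_{ik}$, and Gronwall's inequality together with the uniform bound Assumption~\ref{as:upper-bound-on-weights} gives
\[
x_i(t_{p+1})\ge \exp\Bigl(-\ssum_k\int_{t_p}^{t_{p+1}}a_{ik}(t)\,dt\Bigr)\ge e^{-nM}.
\]
This is precisely the estimate~\eqref{eq:lower-bound-on-xj} obtained in the proof of Proposition~\ref{prop:bound-discrete-time-weights-vs-continuous-time-weights} applied to the agent that starts at value $1$; hence $\beta=e^{-nM}$ is a uniform lower bound on the diagonal, independent of $i$ and $p$.

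For part (b), the idea is to sandwich the discrete-time cut of $\phi$ between the corresponding continuous-time integral cuts using the two inequalities of the Proposition, and then close the loop with the integral reciprocity Assumption~\ref{as:reciprocity}. Fixing a proper subset $S$ and an index $p$, the upper bound of the Proposition applied to $S$ controls the outgoing discrete flow,
\[
\ssum_{i\in S,j\notin S}\phi_{ij}(p)\le n\ssum_{i\in S,j\notin S}\int_{t_p}^{t_{p+1}}a_{ij}(t)\,dt,
\]
while the lower bound of the Proposition applied to the complement $\NN\setminus S$ controls the incoming discrete flow from below,
\[
\ssum_{i\in S,j\notin S}\phi_{ji}(p)\ge G\ssum_{i\in S,j\notin S}\int_{t_p}^{t_{p+1}}a_{ji}(t)\,dt .
\]
Chaining these two estimates with the reciprocity bound $\sum_{i\in S,j\notin S}\int a_{ij}\le K\sum_{i\in S,j\notin S}\int a_{ji}$ of Assumption~\ref{as:reciprocity} yields
\[
\ssum_{i\in S,j\notin S}\phi_{ij}(p)\le nK\ssum_{i\in S,j\notin S}\int_{t_p}^{t_{p+1}}a_{ji}(t)\,dt\le \frac{nK}{G}\ssum_{i\in S,j\notin S}\phi_{ji}(p),
\]
so that the discrete cut-balance condition~\eqref{eq:cut_balance_assumption_DT} holds with $K'=nK/G=n^2K\exp(2nM)$, which depends only on $K$, $M$ and $n$ as claimed.

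The one step that requires genuine care — and which I expect to be the only real obstacle — is the index bookkeeping in the lower bound. One must apply Proposition~\ref{prop:bound-discrete-time-weights-vs-continuous-time-weights} to the complementary set $\NN\setminus S$ rather than to $S$, and then verify that its outgoing continuous-time cut $\sum_{a\in \NN\setminus S,\,b\in S}\int a_{ab}$ coincides, after relabelling $(a,b)\to(j,i)$, with the reversed cut $\sum_{i\in S,\,j\notin S}\int a_{ji}$ that appears on the right-hand side of the reciprocity condition and on the left-hand side of $\phi_{ji}$. Once these cuts are correctly matched, the whole chain of inequalities is routine, and the constant $K'$ follows from substituting $G=\exp(-2nM)/n$.
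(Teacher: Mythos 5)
Your proof is correct and follows essentially the same route as the paper: part (b) is the identical chain (upper bound of Proposition~\ref{prop:bound-discrete-time-weights-vs-continuous-time-weights} on $S$, lower bound on $\NN\setminus S$, relabel, close with Assumption~\ref{as:reciprocity}), and part (a) differs only in using the complementary artificial state ($x_i=1$, others $0$, bounding $\phi_{ii}$ directly via the Gronwall estimate~\eqref{eq:lower-bound-on-xj}) instead of the paper's choice ($x_i=0$, others $1$, bounding $\sum_{j\neq i}\phi_{ij}\le 1-e^{-nM}$ and using row-stochasticity), with the same constant $\beta=e^{-nM}$. Your constant $K'=nK/G$ is the right one; the paper's text states $K'=n/G$, which drops the factor $K$ coming from Assumption~\ref{as:reciprocity} and appears to be a typo.
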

\vspace{.2 cm}
Note that \jmh{(b)} would in general not be true for certain stronger forms or reciprocity. In particular, $\int_{t_p}^{t_{p+1}} a_{ij}(t) dt \leq K \int_{t_p}^{t_{p+1}} a_{ji}(t) dt$ does not imply the existence of a $K'$ such that $\phi_{ij}(p)\leq K' \phi_{ji}(p)$.
\vspace{.2 cm}

%\end{proof}

%\noindent\textbf{Proof of a).}
\begin{proof}
The proof of (a) is as follows. For arbitrary $k\in \NN$ and $p$, we suppose that $x_k(t_p)=0$, and $x_i(t_p) =1$ for every $i\neq k$. A reasoning similar to that leading to \eqref{eq:lower-bound-on-xj} in the proof of  Proposition~\ref{prop:bound-discrete-time-weights-vs-continuous-time-weights} shows that $x_k(t_{p+1})\leq 1- e^{-nM}$. It follows then from Lemma \ref{l:discrete-time-system} applied to $S=\{k\}$ that $\sum_{j\in \NN,j\neq k} \phi_{kj}(p) \leq 1- e^{-nM}$, and thus that $\phi_{kk}(p)\geq e^{-nM}$, which establishes (a).

%The proof of a) is as follows. For some $k \in \NN$, denote $S = \{k\}$. We apply the first inequality in equation~(\ref{eq:intermediary-bound}) and Assumption~\ref{as:upper-bound-on-weights} (upper bound) to show that
%$$
%x_k(t_{p+1}) \le 1- e^{-nM}.
%$$
%Then,
%$$
%\phi_{kk}(p) = 1- \ssum_{j\neq k} \phi_{kj}(p) = 1-x_k(t_{p+1}) \ge e^{-nM}.
%$$

%\noindent\textbf{Proof of b).}
We now prove statement (b). The first inequality of Proposition~\ref{prop:bound-discrete-time-weights-vs-continuous-time-weights} applied to $S$ states that
\begin{equation}\label{eq:cbdirect}
\ssum_{i\in S}^{j\notin S} \phi_{ij}(p) \le n \cdot \ssum_{i\in S}^{j\notin S} \int_{t_p}^{t_{p+1}} a_{ij}(t)dt.
\end{equation}
On the other hand, 
%applying 
the second inequality of the same proposition applied to $\NN \setminus S$ yields
$$
G \cdot \ssum_{i\not \in S}^{j\in S} \int_{t_p}^{t_{p+1}} a_{ij}(t)dt \le \ssum_{i\not \in S}^{j\in S} \phi_{ij}(p),
$$
which can be rewritten as
\begin{equation}\label{eq:cb_indirect}
G \cdot \ssum_{i \in S}^{j\not \in S} \int_{t_p}^{t_{p+1}} a_{ji}(t)dt \le \ssum_{i\in S}^{j\notin S} \phi_{ji}(p).
\end{equation}
Statement (b) with $K' = n/G$ follows then directly from Assumption \ref{as:reciprocity} and the inequalities \eqref{eq:cbdirect} and \eqref{eq:cb_indirect}.
% is a direct corollary of Proposition~\ref{prop:bound-discrete-time-weights-vs-continuous-time-weights} applied to $S$ and $\NN\setminus S$, with $K' = n/G$.
\end{proof}
\vspace{.2cm}
\begin{proof}[of Theorem~\ref{th:consensus-under-persistent-connectivity-and-upper-bound-on-cummulative-weights-and-reciprocity}]

Since Lemma~\ref{l:a-and-b} is satisfied, Theorem~\ref{thm:DT_result} applies.
Thus, the sequence $x(t_p)$ converges to some $x^*\in \Re^n$. Denote by $G'=(\NN,E')$ the directed graph where $(j,i)\in E$ if
%whenever 
$\sum_{p=0}^\infty \varphi_{ij}(p) = +\infty$.
Theorem \ref{thm:DT_result} implies that the (strongly) connected components of $G'$ are entirely disconnected from each other (i.e. the different strongly connected components are not joined by any edge), and that $x_i^* = x_j^*$ if $i$ and $j$ belong to the same component. A local consensus takes thus place on each such component for the discrete-time system $y(p) = x(t_p)$. 
%Theorem \ref{thm:DT_result} also implies that $x_i^* = x_j^*$ if $i$ and $j$ belong to the same connected component of the graph $G'$. This graph only has strong components which are fully disconnected to each other and in which consensus takes place for the discrete-time system $y(p) = x(t_p)$. 
Now, the graph $G$ of persistent interactions defined in the statement of Theorem \ref{th:consensus-under-persistent-connectivity-and-upper-bound-on-cummulative-weights-and-reciprocity} is in general different from $G'$. However,
as a direct corollary of Proposition~\ref{prop:bound-discrete-time-weights-vs-continuous-time-weights}, we have that $G$ and $G'$ have the same connected components.

It remains to show that the continuous-time function $x(t)$ converges to the same $x^*$ as the sequence $(x(t_p))$.
%\noindent\textbf{Proof of d).}
We prove this by showing that for each set of node $S \subseteq \NN$ inducing strongly connected component in $G$ (or in $G'$), both the minimum $\xm_S(t) = \min_{i \in S} x_i(t)$ and the maximum $\xM_S(t) = \max_{i \in S} x_i(t)$
% and the diameter $\xM_S(t) - \xm_S(t)$ all 
converge to the same value.
Since $S$ is a connected component of $G$, the integral influence $\sum_{i\in S, j\notin S}\int_0^\infty a_{ij}(t) dt$ is finite. For any $\mu > 0$, there exists some $T_\mu\ge 0$ such that
$$
\sum_{i\in S, j\notin S}\int_{T_\mu}^\infty a_{ij}(t)dt < \mu.
$$
We denote again by $m(\tau)$ the index of (one of) the agents with the largest value, so that $x_{m(\tau)}(\tau) = \xM_S(\tau)$. Then, for all $v>u \ge T_\mu$, we have
%For all $v>u \ge T_\mu$, we have, using notation $m(\tau) \in S$ chosen such that $x_{m(\tau)}(\tau) = \xM_S(\tau)$ as before,
\begin{eqnarray*}
\xM_S(v) - \xM_S(u) &\le& \sum_{j\notin S}\int_{u}^v a_{m(\tau)j} (x_j(\tau)-x_{m(\tau)}(\tau)) d\tau \\
&\le& \sum_{j\notin S}\int_{u}^v a_{m(\tau)j} |x_j(\tau)-x_{m(\tau)}(\tau)| d\tau \\
&\le& \sum_{i \in S, j\notin S}\int_{u}^v a_{m(\tau)j} |x_j(\tau)-x_i(\tau)| d\tau \\
&\le& \sum_{i\in S, j\notin S}\int_{T_\mu}^\infty a_{ij} | x_j(\tau)-x_i(\tau) | d\tau \\
&\le& \mu \Diam(0), \\
\end{eqnarray*}
where $\Diam(0) = \max_{i \in \NN} x_i(0) - \min_{i \in \NN} x_i(0)$. This shows that the $\xM_S$ form a Cauchy sequence and thus that they converge. %converges in the sense of Cauchy, thus it converges. 
Since the sub-sequence $(\xM_S(t_p))$ converges to $x_i^*$ for some $i\in S$, there holds $\lim_{t\rightarrow +\infty} \xM_S(t) = x_i^*$. We can apply the same reasoning to show that $\xm_S$ also converges $\lim_{p\rightarrow +\infty} \xm_S(t_p) = x_i^*$. We conclude that for all $i \in S$, $x_i(t)$ converge to the same limit $x_i^*$.
\end{proof}

\sam{\subsection{Proof of Theorem \ref{th:pairwise_reciprocity}}
\label{sec:proof_pairwise}}

For concision, we say that an unordered pair $\{i,j\}=\{j,i\}$ is \emph{active} over an interval $I$ if $\int_{t\in I}a_{ij}(t) dt\geq \eps$ and $\int_{t\in I}a_{ji}(t) dt\geq \eps$. \sam{We let $T = \max_{i,j}T_{ij}$.} The following observation compiles some properties that follow directly from the definition of being active.
\vspace{.15cm}

\sam{
\begin{observation}\label{obs:properties_active}$ $\\
\noindent a) Consider two intervals $I,J$ with $I\subseteq J$. If $\{i,j\}$ is active over $I$, it is active over $J$.\\
%\noindent b) Consider two intervals $I,J$ with $I\subseteq J$. If $\{i,j\}$ is not active over $J$, it is not active over $I$.\\
\noindent b) Under Assumption \ref{as:pairwise_reciprocity}, if $a_{ij}(t)>0$, then $\ivj$ is active over $[t-T,t+T]$.\\
\noindent c) Under Assumption \ref{as:pairwise_reciprocity}, if $\ivj$ is not active over $[t,t']$, then $a_{ij}(s) = a_{ji}(s) = 0$ for all $s\in [t+T,t'-T]$.
\end{observation}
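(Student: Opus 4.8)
The plan is to treat the three items in the order (a), (b), (c), since (a) serves as a lemma for (b), and (c) will follow from (b) by contraposition. Item (a) is immediate from the monotonicity of the integral of a nonnegative function: since $a_{ij}\ge 0$ and $I\subseteq J$, there holds $\int_{t\in J}a_{ij}(t)\,dt \ge \int_{t\in I}a_{ij}(t)\,dt\ge\eps$, and identically for $a_{ji}$, so $\ivj$ is active over $J$. No further argument is needed here.

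For (b) the idea is to extract from Assumption \ref{as:pairwise_reciprocity} an interval on which the pair is active, and then enlarge it to $[t-T,t+T]$ by invoking (a). Concretely, if $a_{ij}(t)>0$ then the hypothesis ``$a_{ij}(t)>0$ or $a_{ji}(t)>0$'' of Assumption \ref{as:pairwise_reciprocity} holds, so there exist $\utij,\otij$ with $t\in[\utij,\otij]$, with $\otij-\utij\le T_{ij}\le T$, and with $\int_{\utij}^{\otij}a_{ij}(s)\,ds\ge\eps$ and $\int_{\utij}^{\otij}a_{ji}(s)\,ds\ge\eps$; that is, $\ivj$ is active over $[\utij,\otij]$. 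The key bookkeeping step is the inclusion $[\utij,\otij]\subseteq[t-T,t+T]$: from $t\in[\utij,\otij]$ together with $\otij-\utij\le T$ I obtain $\otij\le\utij+T\le t+T$ and $\utij\ge\otij-T\ge t-T$. Applying (a) with $I=[\utij,\otij]$ and $J=[t-T,t+T]$ then yields that $\ivj$ is active over $[t-T,t+T]$.

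For (c) I would argue by contraposition. Suppose some $s\in[t+T,t'-T]$ has $a_{ij}(s)>0$ or $a_{ji}(s)>0$. By (b) the pair $\ivj$ is active over $[s-T,s+T]$. Since $s\ge t+T$ and $s\le t'-T$ give $s-T\ge t$ and $s+T\le t'$, we have $[s-T,s+T]\subseteq[t,t']$, so (a) makes $\ivj$ active over $[t,t']$, contradicting the hypothesis that it is not active there. Hence $a_{ij}(s)=a_{ji}(s)=0$ throughout $[t+T,t'-T]$. The only delicate point---and it is hardly an obstacle---is the endpoint arithmetic in (b) showing that the length bound $T_{ij}\le T$ together with $t\in[\utij,\otij]$ forces the active interval to be centered enough to sit inside $[t-T,t+T]$; everything else reduces to monotonicity of the integral and contraposition.
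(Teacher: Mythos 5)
Your proof is correct and simply spells out the routine verification that the paper leaves implicit (the observation is stated as following ``directly from the definition'' of being active, with no written proof). The endpoint arithmetic in (b) and the contrapositive in (c) are exactly the intended steps; the only cosmetic remark is that in (c) you invoke (b) for the case $a_{ji}(s)>0$ as well, which is justified because Assumption~\ref{as:pairwise_reciprocity} and the notion of an active pair are symmetric in $i$ and $j$.
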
\vspace{.15cm}
}

The next Proposition is the core of our proof, it allows building a sequence of times $t_k$ valid for Assumptions \ref{as:reciprocity} and \ref{as:upper-bound-on-weights}.

\vspace{.15cm}

\begin{proposition}\label{prop:build_tk}
Suppose that Assumption~\ref{as:pairwise_reciprocity} is satisfied, and let $M =  M_1+M_2$ where $M_1,M_2$ are any constant satisfying
\begin{equation}\label{eq:defM2-M1}
M_2>n(n-1)T + T \text{ and } M_1 \geq M_2 +T.
\end{equation}
Then, there exists a sequence $t_0,t_1,\dots$ with $t_0=0$, and $t_{k+1}-t_k \leq M$, 
%with $M=M_1+M_2$ and $M_1,M_2$ being some constant satisfying
%and 
such that the following condition $\Ak$ holds for every $k$.
% where\\
\sam{
%\hspace{0.5cm}
\begin{wholeindent}
$\Ak :$ for all $i,j \in \NN$ distinct, \\ Condition $\Auk$ or Condition $\Adk$ holds, 
\end{wholeindent}
}
with
$$
\left[
\begin{array}{l}
\Auk : \forall t\in [t_k,t_{k+1}], a_{ij}(t)=0, \\
\Adk : \ivj \text{ is \on over } [t_k,t_{k+1}].
\end{array}
\right.
$$
\end{proposition}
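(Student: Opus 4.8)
The plan is to build the cut points $t_k$ inductively, placing each new cut in a window that lies well ahead of the previous one, and choosing its exact position so that no pair's reciprocity window gets ``split'' across a cut. The first step is to reformulate what $\Ak$ really demands. Fix an interval $[t_k,t_{k+1}]$ and a pair $\{i,j\}$. By Observation~\ref{obs:properties_active}(c), if the pair is not \on over $[t_k,t_{k+1}]$, then $a_{ij}=a_{ji}=0$ on the core $[t_k+T,t_{k+1}-T]$; hence the only way $\Ak$ can fail for this pair is that it carries positive weight in one of the two boundary layers $[t_k,t_k+T]$ or $[t_{k+1}-T,t_{k+1}]$ while failing to be \on over the whole interval. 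By Observation~\ref{obs:properties_active}(b), any weight at a time $s$ already makes the pair \on over $[s-T,s+T]$; so a boundary-layer weight can fail to make the pair \on over $[t_k,t_{k+1}]$ only when this width-$2T$ window protrudes past an endpoint $t_k$ or $t_{k+1}$, carrying the bulk of its $\eps$-mass outside $[t_k,t_{k+1}]$. Thus it suffices to place the $t_k$ so that no reciprocity window straddles a cut in this stranding way, in which case every pair is either \on over $[t_k,t_{k+1}]$ (giving $\Adk$) or identically zero in both directions there (giving $\Auk$).

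The core construction is then a pigeonhole over cut positions. Given $t_k$, I would search for $t_{k+1}$ inside the window $W_k=[t_k+M_1,\,t_k+M]$, which has length $M_2$. For each ordered pair $(i,j)$, let $\theta_{ij}\ge t_k$ be the first time at which $\{i,j\}$ is \on over $[t_k,\theta_{ij}]$; this is well defined since $t\mapsto\int_{t_k}^{t}a_{ij}$ is absolutely continuous. I claim a candidate $\tau\in W_k$ can strand $(i,j)$ at its right endpoint only if $\tau\in[\theta_{ij}-T,\theta_{ij})$. Indeed, because $\tau-T\ge t_k+M_1-T\ge t_k+T$ (using $M_1\ge M_2+T\ge 2T$), any weight of the pair in $(\tau-T,\tau]$ occurs at a time $s\ge t_k+T$, so by Observation~\ref{obs:properties_active}(b) and~(a) the pair is \on over $[t_k,s+T]\subseteq[t_k,\tau+T]$, forcing $\theta_{ij}\le\tau+T$, i.e. $\tau\ge\theta_{ij}-T$. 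Hence the set of right-stranding candidates contributed by each ordered pair has length at most $T$. The left boundary of $[t_k,t_{k+1}]$ is controlled by a time-reversed version of the same estimate, folded into the same budget, together with the inductive invariant that each $t_k$ was itself chosen to be non-stranding.

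Summing over the $n(n-1)$ ordered pairs, the forbidden set of candidate positions inside $W_k$ has total length at most $n(n-1)T$, which is strictly smaller than $M_2-T=|W_k|-T$ precisely because of the hypothesis $M_2>n(n-1)T+T$. Therefore $W_k$ contains an admissible $t_{k+1}$, and since $t_{k+1}\in W_k$ we obtain $M_1\le t_{k+1}-t_k\le M$, so the sequence is strictly increasing with bounded gaps, with $t_0=0$ as required. The case $k=0$ needs no left-boundary argument, since the weights are supported on $\R^+$ and no reciprocity window can straddle $t_0=0$ from the left. With $t_{k+1}$ chosen to avoid every stranding position, the reformulation in the first paragraph yields $\Auk$ or $\Adk$ for each pair, so $\Ak$ holds for all $k$.

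I expect the delicate point to be exactly this boundary bookkeeping: proving that each ordered pair forbids a set of cut positions of length only $T$, so that the count closes against $M_2>n(n-1)T+T$, while simultaneously reconciling the right boundary of $[t_k,t_{k+1}]$ with the left boundary of $[t_{k+1},t_{k+2}]$ so that the single choice of $t_{k+1}$ serves both adjacent intervals. The two constants are what make the estimate go through: $M_1\ge M_2+T$ guarantees both that $W_k$ starts beyond $t_k+2T$ (so the weight-to-activity arguments land inside $[t_k,\cdot]$) and that every interval has length at least $2T$ (so a single reciprocity window fits inside it), while the extra slack in $M_2>n(n-1)T+T$ leaves a nonempty set of admissible cuts after the forbidden positions are removed.
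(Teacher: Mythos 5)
Your construction is, at bottom, the same induction as the paper's: pick $t_{k+1}$ in the window $[t_k+M_1,\,t_k+M_1+M_2]$, control the right endpoint of $[t_k,t_{k+1}]$ directly and the left endpoint of $[t_{k+1},t_{k+2}]$ through an inductive invariant, and close a counting argument over pairs against $M_2>n(n-1)T+T$. You replace the paper's explicit algorithm (advance $\bar t$ by $T$ whenever $\Ak$ or $\Bkk$ fails, and show each advance strictly enlarges one of two sets of pairs of size at most $n(n-1)/2$) by a measure-theoretic pigeonhole: each pair forbids a set of candidate cut positions of length at most $T$, so a total of length $M_2$ minus at most $n(n-1)T$ survives. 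Your right-endpoint half is correct and is genuinely the cleaner of the two: pinning the forbidden set to $[\theta_{ij}-T,\theta_{ij})$, where $\theta_{ij}$ is the first time the pair becomes active over $[t_k,\cdot]$, is a nice closed-form version of the paper's Claim~1.

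The gap is in the left-endpoint half, which you dispose of as ``a time-reversed version of the same estimate, folded into the same budget.'' It is not time-symmetric. The right-stranding set is an interval of length $\leq T$ \emph{because} it is anchored to the single, monotonically defined time $\theta_{ij}$; there is no analogous canonical time for left-stranding. A priori, a pair whose weight comes in several isolated reciprocity bursts inside the candidate window could contribute one forbidden interval of length up to $T$ \emph{per burst}, i.e.\ up to order $M_2/T$ bursts and a forbidden set of measure comparable to $M_2$ itself, destroying the count. What saves the argument is Observation~\ref{obs:properties_active}(c) combined with $M_1\geq M_2+T$: if $\tau$ is left-stranding (weight in $[\tau,\tau+T]$ but $\ivj$ not active over $[\tau,\tau+M_1]$), then the pair carries no weight on $[\tau+T,\tau+M_1-T]\supseteq[\tau+T,\tau+M_2]$, which covers the remainder of the candidate window and forces every further left-stranding position either into $[\tau,\tau+T)$ or into the last $T$ of the window (which you must then exclude — this is presumably the role of your reserved slack $M_2-T$). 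Only after this step does each unordered pair cost $T$ for the right endpoint plus $T$ for the left endpoint, matching the $n(n-1)T$ budget; note also that your ``sum over ordered pairs'' silently relies on this, since $\theta_{ij}=\theta_{ji}$ (being active is a property of the unordered pair), so right-stranding alone only uses half of the $n(n-1)T$ you allocate. This missing lemma is exactly the content of the paper's Claim~2 and its set $V_t$ of pairs that have ceased interacting; without it the pigeonhole does not close, so you should state and prove it rather than appeal to symmetry.
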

\vspace{.15cm}
The proof of Proposition~\ref{prop:build_tk} is based on an induction that makes use of
the intermediate Condition $\Bk$ :
\sam{
\begin{wholeindent}
$\Bk :$ for all $i,j \in \NN$ distinct, \\ Condition $\Buk$ or Condition $\Bdk$ holds,
\end{wholeindent}
}
with
$$
\left[
\begin{array}{l}
\Buk : \forall t\in [t_k,t_k+T], a_{ij}(t)=0,  \\
\Bdk : \ivj \text{is \on over } [t_k,t_k + M_1].
\end{array}
\right.
$$
%\comjh{I've removed the "and" since it is a "or"}
%\comjh{I've also changed 'on' by active', and removed 'off', because it was unclear if it meant = 0 everywhere on the interval or not active}
%\vspace{.15cm}

The next Lemma treats the initialization of the induction.
\begin{lemma}\label{lem:initiate_induction}
Suppose that $a_{ij}(t)=0$ for all $t\leq 0$, and let $t_0 = 0$. Then Condition $\Bz$ holds.
\end{lemma}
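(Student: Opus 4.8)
The plan is to verify Condition $\Bz$ separately for each pair, via a simple dichotomy on whether the weight $a_{ij}$ vanishes on the initial window. Recall that $\Bz$ requires, for every distinct $i,j\in\NN$, that either $\Buz$ (namely $a_{ij}(t)=0$ for all $t\in[t_0,t_0+T]=[0,T]$) or $\Bdz$ (namely $\ivj$ is active over $[0,M_1]$) holds. I would therefore fix an ordered pair $(i,j)$ and distinguish two cases. In the first case $a_{ij}(t)=0$ for all $t\in[0,T]$, and then $\Buz$ holds by definition, so nothing more is needed for this pair.

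In the second case there exists some $t\in[0,T]$ with $a_{ij}(t)>0$. Here I would invoke Observation~\ref{obs:properties_active}(b), which promotes a single instant of positive weight into activity over a window: $\ivj$ is active over $[t-T,t+T]$. The one delicate point --- and really the only obstacle in this otherwise routine initialization --- is that this window extends below the origin, since $t\le T$ forces $t-T\le 0$. This is exactly where the hypothesis $a_{ij}(s)=0$ for $s\le 0$ enters: the contributions of both $a_{ij}$ and $a_{ji}$ over $[t-T,0]$ vanish, so that $\int_{t-T}^{t+T}a_{ij}=\int_{0}^{t+T}a_{ij}$ and likewise for $a_{ji}$, whence $\ivj$ is already active over $[0,t+T]$.

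It then remains to enlarge the interval up to $[0,M_1]$. Since $t\le T$ we have $t+T\le 2T$, while the constraints \eqref{eq:defM2-M1} give $M_1\ge M_2+T>2T\ge t+T$, so that $[0,t+T]\subseteq[0,M_1]$. Applying Observation~\ref{obs:properties_active}(a) (activity is preserved when the interval is enlarged) then yields that $\ivj$ is active over $[0,M_1]$, i.e. $\Bdz$ holds. Since one of the two cases applies to every ordered pair $(i,j)$, Condition $\Bz$ follows.
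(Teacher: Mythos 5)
Your proof is correct and follows essentially the same route as the paper's: in the non-trivial case you promote a time $t\in[0,T]$ with $a_{ij}(t)>0$ to activity on $[t-T,t+T]$ via Observation~\ref{obs:properties_active}(b), use the vanishing of the weights on $(-\infty,0]$ to truncate the window at $0$, and enlarge to $[0,M_1]$ via Observation~\ref{obs:properties_active}(a) and the bound $M_1>2T$ from \eqref{eq:defM2-M1}. The only difference is cosmetic (you truncate before enlarging, the paper enlarges to $[\min(0,t-T),2T]$ first), so there is nothing further to add.
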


\begin{proof}
Suppose that $\Buz$ does not hold, \ie $a_{ij}(t) >0$ for some $t\in [0,T]$. Then, Observation \ref{obs:properties_active}(b) implies that $\ivj$ is \on over $[t-T,t+T]$ which by Observation \ref{obs:properties_active}(a) implies that $\ivj$ is $\on$ over $[\min(0,t-T),2T]$. Since $a_{ij}(t')=0$ for all $t'<0$, it follows then that $\ivj$ is \on over $[0,2T]$ and since $M_1\ge M_2 +T > n(n-1)+2T \ge 2T$ holds according to equation~\eqref{eq:defM2-M1}, $\ivj$ is \on over $[0,M_1]$ (again thanks to Observation \ref{obs:properties_active}(a)). Thus $\Bdz$ holds and so does $\Bz$.
\end{proof}

\begin{proposition}[Inductive case]\label{prop:inductive_case}
If there exists $t_k$ such that Condition $\Bk$ holds, then there exists $t_{k+1}\leq t_k + M_1+M_2$ for which Conditions $\Ak$ and $\Bkk$ hold.
\end{proposition}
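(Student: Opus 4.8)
The plan is to locate a suitable cut $t_{k+1}$ inside the window $[t_k+M_1,\,t_k+M_1+M_2]$, which makes the bound $t_{k+1}-t_k\le M_1+M_2=M$ automatic, and to show that a ``clean'' cut in this window satisfies both $\Ak$ and $\Bkk$. The argument rests on three facts. First, by Observation~\ref{obs:properties_active}(a) the family of pairs that are active over $[t_k,\tau]$ is non-decreasing in $\tau$, so there are at most $\binom{n}{2}$ times $\sigma_{ij}$ at which a new pair $\ivj$ first becomes active over $[t_k,\cdot]$. Second, if $\ivj$ is \emph{not} active over $[t_k,\tau]$, then $\Bk$ forces $\Buk$ (were $\Bdk$ to hold, $\ivj$ would be active over $[t_k,t_k+M_1]\subseteq[t_k,\tau]$), giving silence on $[t_k,t_k+T]$, while Observation~\ref{obs:properties_active}(c) applied to $[t_k,\tau]$ gives silence on the core $[t_k+T,\tau-T]$; hence $\ivj$ is silent on all of $[t_k,\tau-T]$. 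Third, and this is what links the backward interval of $\Ak$ to the forward interval of $\Bkk$: if $\ivj$ is silent on $[\tau-T,\tau]$ and has activity at some $s\in[\tau,\tau+T]$, then it is active over $[\tau,\tau+M_1]$, since by Observation~\ref{obs:properties_active}(b) the window $[s-T,s+T]$ carries mass $\ge\eps$ in each direction and, the part $[s-T,\tau]\subseteq[\tau-T,\tau]$ being silent, all this mass lies in $[\tau,s+T]\subseteq[\tau,\tau+M_1]$ (using $M_1\ge 2T$).

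Call $\tau\ge t_k+M_1$ a \emph{clean cut} if (A) every pair not active over $[t_k,\tau]$ is silent on $[\tau-T,\tau]$, and (B) every pair active over $[t_k,\tau]$ but not over $[\tau,\tau+M_1]$ is silent on $[\tau,\tau+T]$. I claim a clean cut yields $\Ak$ and $\Bkk$. For $\Ak$: a pair active over $[t_k,\tau]$ gives $\Adk$; otherwise it is silent on $[t_k,\tau-T]$ by the second fact and on $[\tau-T,\tau]$ by (A), hence silent on $[t_k,\tau]$, giving $\Auk$. For $\Bkk$: a pair active over $[\tau,\tau+M_1]$ gives $\Bdkk$; a pair active over $[t_k,\tau]$ but not over $[\tau,\tau+M_1]$ is silent on $[\tau,\tau+T]$ by (B), giving $\Bukk$; finally a pair active over neither is, by (A), silent on $[\tau-T,\tau]$, so the third fact forbids activity on $[\tau,\tau+T]$ (which would force forward activeness), again giving $\Bukk$.

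It remains to find a clean cut in the window, and this counting step is where I expect the real difficulty to lie. A cut fails (A) only when some not-yet-active pair has activity in $[\tau-T,\tau]$; by the first fact and Observation~\ref{obs:properties_active}(b) this places $\tau$ in $[\sigma_{ij}-T,\sigma_{ij})$, so the (A)-forbidden set has total length $\le \binom{n}{2}T$. A cut fails (B) only when an active pair has activity in $[\tau,\tau+T]$ immediately followed by a silent stretch $[\tau+T,\tau+M_1-T]$ (Observation~\ref{obs:properties_active}(c)) of length $M_1-2T>n(n-1)T$; such ``pre-gap'' times are widely separated, so within the window they contribute only $O(T)$ of forbidden length per pair. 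The whole point of the constants $M_2>n(n-1)T+T$ and $M_1\ge M_2+T$ is to guarantee that the union of these forbidden intervals does not cover $[t_k+M_1,t_k+M_1+M_2]$, leaving a clean cut.

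The main obstacle is precisely this last estimate: unlike the backward-active family, the forward condition in $\Bkk$ is not monotone in $\tau$, so the ``offset'' transitions underlying (B) are not controlled by the $\le\binom{n}{2}$ onset times and must be bounded separately through the long silent gaps they induce. Packing both the (A)- and (B)-forbidden intervals into the length-$M_2$ budget is therefore the delicate part, and it is exactly the relation $M_1\ge M_2+T$ (which makes each such gap longer than the search window, limiting how many offsets a single pair can contribute) together with $M_2>n(n-1)T+T$ that must be exploited to close the argument.
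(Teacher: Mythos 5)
Your reduction to the existence of a ``clean cut'' is sound: the three preliminary facts are correct, and the verification that a clean cut $\tau$ yields both $\Ak$ and $\Bkk$ faithfully reproduces the content of the paper's two Claims. The genuine gap is the step you yourself flag as the ``main obstacle'': you never actually establish that a clean cut exists in $[t_k+M_1,\,t_k+M_1+M_2]$, and the union bound you sketch does not close with the stated constants. The (A)-forbidden set costs at most $T$ per pair, i.e.\ $\tfrac{1}{2}n(n-1)T$ in total, but your ``$O(T)$ per pair'' estimate for the (B)-forbidden set is really $2T$ per pair: for a fixed pair $\ivj$, two failure times $\tau<\tau'$ must satisfy $\tau'-\tau<T$ or $\tau'-\tau>M_1-2T\ge M_2-T$, so the forbidden set is a union of clusters of diameter $<T$ separated by more than $M_2-T$, and a window of length $M_2$ can contain \emph{two} such clusters (one starting in $[t_k+M_1,t_k+M_1+T)$, one ending at $t_k+M_1+M_2$). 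Summing gives $\tfrac{3}{2}n(n-1)T$ of forbidden measure against a budget of only $M_2>n(n-1)T+T$, which is insufficient for every $n\ge 2$. So the argument as proposed does not terminate.

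The paper closes exactly this gap by changing the bookkeeping, and the change is precisely aimed at the non-monotonicity you correctly identify. Instead of measuring forbidden times, it probes discrete candidates $\bart=t_k+M_1,\,t_k+M_1+T,\,t_k+M_1+2T,\dots$ and charges each failure to a new element of one of two \emph{monotone} families of pairs: $R_t$ (pairs \on~over $[t_k,t]$, handling $\Ak$-failures as in your fact on $\sigma_{ij}$) and $V_t$, the set of pairs silent on all of $[t,\,t_k+M_1+M_2]$ --- silence anchored to the \emph{fixed right endpoint of the window}, not to $\tau+M_1$. The hypothesis $M_1\ge M_2+T$ is used exactly here: the silent stretch $[\bart+T,\bart+M_1-T]$ produced by a $\Bkk$-failure satisfies $\bart+M_1-T\ge t_k+M_1+M_2$, so the offending pair enters $V_{\bart+T}\setminus V_{\bart}$. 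Monotonicity of $R$ and $V$ then guarantees each pair is charged at most once per family, so at most $n(n-1)$ steps of length $T$ occur, which fits in $M_2-T$ by the hypothesis $M_2>n(n-1)T+T$. To repair your proof you would need the same device (or an equivalent one ensuring each pair contributes at most $T$, not $2T$, of (B)-forbidden time); as written, the existence of the clean cut remains unproven.
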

\begin{proof}

Let us introduce the following two sets of unordered pairs of agents for every $t\in [t_k,t_k+M_1+M_2]$.
\begin{itemize}
\item $R_t \subseteq \{\ivj | i,j \in \NN, i\neq j\}$: set of pairs $\ivj$ which are \on over time interval $[t_k,t]$.
%for which there exists $[\utij,\otij]\subseteq [t_k,t]$ on which $(i,j)=(j,i)$ is \activee, with $\otij-\utij \leq T$. (i.e. set of those pairs that are already \activee at $t$).
\item $V_t \subseteq \{\ivj | i,j \in \NN, i\neq j\}$: set of pairs $\ivj$ for which $a_{ij}(t') = a_{ji}(t') =0$ for all $t'\in [t,t_k+M_1+M_2]$, i.e. set of pairs where there is no interaction between $t$ and $t_k+M_1+M_2$.
\end{itemize}
Note that for all $t_k\leq t\leq s \leq t_k+M_1+M_2$, there holds $R_t \subseteq R_s$ and $V_t \subseteq V_s$, so that these sets are non-decreasing with time. The non-decrease of $V_s$ is trivial while that of $R_s$ follows directly from Observation \ref{obs:properties_active}(a).

\samf{For given $k$ and $t_k$}, we now build a $t_{k+1}$ using Algorithm \ref{algo}, which we prove to always successfully terminate. We first prove that Claims 1 and 2 hold, and then show how this implies the statement of Proposition~\ref{prop:inductive_case}.

\begin{algorithm}
  \caption{Selection of $t_{k+1}$
  \label{algo}}
  \begin{algorithmic}
    
    \Require{$t_k$ satisfies $\Bk$}\\
\hspace{-0.35cm}Set $\bart = t_k + M_1$\\
\hspace{-0.35cm}\textbf{Switch over cases 0 to 3 :}\\
    
    \textbf{Case 0}: $\bart \geq t_k+ M_1+M_2 - T$: STOP, FAILURE\\
    
    \textbf{Case 1}: Conditions $\Ak$ and $\Bkk$ are satisfied taking $t_{k+1} = \bart$. STOP, SUCCESS.\\

\textbf{Case 2}: Condition $\Ak$ does not hold taking $t_{k+1 }= \bart$.\\
\hspace{0.5cm}\emph{\textbf {Claim 1}:} There exists $\ivj\not \in R_{\bart}$ belonging to $R_{\bart+T}$.\\
{\bf Then} set $\bart=\bart+T$ and iterate.\\
%\del{The two following claims hold.\\}
%\del{
%\hspace{0.5cm}\emph{\textbf {Claim 1}:} There exists $\ivj\not \in R_{\bart}$ such that $a_{ij}(t) >0 $ for some $t\in [\bart-T,\bar T]$.\\
%\hspace{0.5cm}\emph{\textbf {Claim 2}:} The pair $\ivj$ of claim 1 belongs to $R_{\bart+T}$}
%\comjh{I've merged claims}\\

\textbf{Case 3}: Condition $\Bkk$ does not hold taking $t_{k+1} = \bart$. \\
\hspace{0.5cm}\emph{\textbf {Claim 2}:} There exists $\ivj \notin V_{\bart}$ belonging to  $V_{\bart+T}$.\\
{\bf Then} set $\bart=\bart+T$ and iterate.
%\del{The two following claims hold.}
%\\
%\del{\hspace{0.5cm}\emph{\textbf {Claim 3}:} There exists $\ivj \notin V_{\bart}$ s.t. $a_{ij}(t) >0$ for some $t\in [\bart, \bart+T]$, and $\ivj$ is not \on over $[\bart, \bart + M_1]$.\\
%there is no $[\utij,\otij]\subseteq [\bart, \bart + M_1]$ on which $(i,j)$ is \activee (with $\otij-\utij \leq T$);\\
%\hspace{0.5cm}\emph{\textbf {Claim 4}:} The pair $\ivj$ of claim 3 belongs to $V_{\bart+T}$.}\\
%\comjh{est-ce un problème que les numéro des claims ne soient pas alignés avec ceux des cases? Si oui, on peut faire passer le case 1 en dernier}
    %\EndFunction
  \end{algorithmic}
\end{algorithm}

%\vspace{8pt}

\emph{Claim 1:}\vspace{2pt}\\
In Case 2 of Algorithm \ref{algo}, Condition $\Ak$ does not hold. \samf{Thus, there exists} $\ivj$ such that
$\Auk$ does not hold, \ie $a_{ij}(t) >0 $ for some $t\in [t_k,\bart]$, and
$\Adk$ does not hold, \ie $\ivj$ is not \on over $[t_k,\bart]$.

The fact that $\Adk$ does not hold implies by definition of $R_{\bart}$ that $\ivj \not \in R_{\bart}$. Let us now show that $t\in [\bart-T,\bart]$. The fact that $\Adk$ does not hold together with Observation \ref{obs:properties_active}(c) implies that $a_{ij}(t') = 0$ for all $t' \in [t_k+T, \bart-T]$. So either $t \in [t_k,t_k+T]$ or $t\in [\bart - T,\bart]$. We show that the first case is impossible: Since $\ivj$ is not \on over $[t_k,\bart]$, and $\bart \ge t_k + M_1$, the negation of Observation \ref{obs:properties_active}(a) implies that $\ivj$ is not \on over $[t_k,t_k+M_1]$, and thus that $\Bdk$ does not hold. However, we know by hypothesis that $\Bk$ holds. Thus, $\Buk$  holds : $t\notin [t_k,t_k+T]$, and as a consequence, $t\in [\bart - T,\bart]$. 

It follows then from Observation \ref{obs:properties_active}(b) that 
%
%From Property~\ref{pty:I_ON_implies_J_ON}, pair 
$\ivj$ is \on over $[t - T,t+T]$ and  from Observation  \ref{obs:properties_active}(a) that it is \on over $[\bart - 2T,\bart+T]$. Since $\bart \ge t_k + M_1>t_k+2T$,
%\del{, and according to equation~\eqref{eq:defM2-M1}, $M_1 > n(n-1)+2T$,} 
the pair $\ivj$ is \on over $[t_k,\bart+T]$, so that : $\ivj \in R_{\bart+T}$, which achieves proving claim 1.
%\comjh{J'ai remis la preuve dans l'ordre logique, mais ca se discute}

\vspace{8pt}
\emph{Claim 2:}\vspace{2pt}\\
Since Condition $\Bkk$ does not hold, there is a pair $\ivj$ that satisfies neither $\Bukk$ nor $\Bdkk$, that is, one for which $a_{ij}(t)>0$ for some $t\in [\bart, \bart+T]$, and for which $\ivj$ is not \on over $[\bart, \bart + M_1]$.
%there is no $[\utij,\otij]\subseteq [\bart, \bart + M_1]$ on which $(i,j)$ is \activee (with $\otij-\utij \leq T$). 
Since $\bart\leq t_k + M_1+M_2 - T$ for otherwise we would have been in case 0, the $t\in [\bart, \bart+T]$ for which 
$a_{ij}(t)>0$ lies in $[\bart,t_k+M_1+M_2]$, which implies that $\ivj \not \in V_{\bart}$ by definition of $V_{\bart}$. We now show that it belongs to $V_{\bart+T}$

By Observation \ref{obs:properties_active}(c), since $\ivj$ is not \on over $[\bart, \bart + M_1]$, there holds $a_{ij}(t') = a_{ji}(t') = 0$ for all $t' \in [\bart+T,\bart + M_1-T]$. Also, by definition, $\bart \ge t_k +M_1$ and $M_1 \ge M_2+T$, so that
$$
\bart +M_1 -T \ge t_k + M_1 + M_1 -T \ge t_k + M_1 + M_2.
$$
Thus, $a_{ij}(t') = a_{ji}(t') = 0$ for all $t' \in [\bart+T,t_k + M_1 + M_2]$ and $\ivj \in V_{\bart+T}$.\\

To complete the proof of Proposition \ref{prop:inductive_case}, we  show that Algorithm~\ref{algo} stops and that when it does, the choice $t_{k+1} = \bart$ satisfies Conditions $\Ak$ and $\Bkk$.
Indeed, at every iteration, either the algorithm stops, or case 2 or 3 applies and $\bart$ increases by $T$. 
%If case 2 or 3 applies, $\bart$ increases by $T$. Otherwise the algorithm stops. 
In case 2, it follows from Claim 1 that the size of $R_{\bart}$ increases by at least $1$, and in case 3, it follows from Claim 2 that the size of $V_{\bart}$ increases by at least $1$. Since both $R_{\bart}$ and $V_{\bart}$ are sets of unordered pairs of distinct nodes, their size cannot exceed $n(n-1)/2$. Therefore, cases 2 and 3 do not apply more than $n(n-1)/2$ times each. In particular, case 0 or 1 must apply for some $\bart \le t_k + M_1 + n(n-1)T$ (remembering that $\bart$ is initially $t_k+M_1$), at which stage the algorithm stops. Now since according to equation~\eqref{eq:defM2-M1}, $M_2> n(n-1)T +T $, case 0 or 1 apply for $\bart < t_k + M_1 + M_2 - T$, so that case 1 must apply first, and the algorithm produces thus a $t_{k+1}$ satisfying $t_{k+1} - t_k \le M_1 + M_2$ for which $\Ak$ and $\Bkk$ are satisfied.
\end{proof}
%,  before or at $t_k + M_1 + n(n-1)/2$, so that case 1 must apply first, and t
\vspace{.15cm}

The proof of Proposition \ref{prop:build_tk} is then a direct consequence of Lemma \ref{lem:initiate_induction} and Proposition \ref{prop:inductive_case}.
\vspace{.15cm}

\begin{proof}[of Theorem \ref{th:pairwise_reciprocity}]
We show that the sequence $t_k$ built in Proposition \ref{prop:build_tk} is valid for Assumptions \ref{as:reciprocity} and \ref{as:upper-bound-on-weights}.
Observe first that since the $a_{ij}(t)$ are assumed to be uniformly bounded \sam{above} and since $t_{k+1}-t_{k}\leq M$, there clearly holds $\int_{t_k}^{t_{k+1}}a_{ij}(t)dt< M'$ for some $M'$ and all $i,j$ and $t_k$, so that Assumption \ref{as:upper-bound-on-weights} holds. Moreover, it follows from Proposition \ref{prop:build_tk} that either $a_{ij}(t) = a_{ji}(t) = 0$ for all $t\in [t_{k},t_{k+1}]$, or $\int_{t_k}^{t_{k+1}}a_{ij}(t)dt\geq \eps  $ and $\int_{t_k}^{t_{k+1}}a_{ji}(t)dt\geq \eps $. Since the latter integrals are also bounded by $M'$, there holds
$$
\int_{t_k}^{t_{k+1}}a_{ij}(t)dt \leq \frac{M'}{\eps}\int_{t_k}^{t_{k+1}}a_{ji}(t)dt,
$$
which implies that Assumption \ref{as:reciprocity} also holds.
\end{proof}

\section{Conclusion}
\label{sec:ccl}
In this paper, we have developed a new reciprocity-based convergence result for continuous-time consensus systems. This result is based on a new assumption which allows for non-instantaneous reciprocity: Unlike previous studies, we only assume that reciprocity takes place on average over contiguous time intervals. This assumption is appropriate \jmh{for} various classes of systems (including classes of broadcasting, gossiping, and self-triggered system where communication is not necessarily synchronous). \sam{We have shown that integral reciprocity (Assumption \ref{as:reciprocity}) alone is not a sufficient condition for convergence. In particular, it does not forbid certain oscillatory behaviors. We have therefore proposed a companion assumption (Assumption \ref{as:upper-bound-on-weights}) stating that quantity of interactions taking place in the intervals over which reciprocity occurs should be uniformly bounded. Under these two assumptions, we have proven that the trajectory of the system always converges, though not necessarily to consensus.} Moreover, consensus takes place among agents in clusters of the \jmh{graph of persistent interactions.} We have also particularized our result to a class of systems satisfying a local pairwise form of reciprocity.
%\del{persistent graph.}

%\comjh{+ add a small comment about why it is useful?} \comsam{done in the beginning of the above paragraph (''This assumption is...``), do you want to detail ?}

Apart from the integral reciprocity and uniform bound, our result does not make any assumption on the interactions between agents, and allows in particular for arbitrary long periods during which the system is idle. As a consequence, it is in general impossible to give absolute bounds on the speed of convergence under the assumptions that we have made. However, future works could relate the speed of convergence to the amount of interactions having taken place in the system, as in~\cite{SamAntoine_Persistent_SICON2013}.

%We believe that this new results can serve in many instances of systems where communications are asynchronous.

\section*{Acknowledgement}

The authors wish to thank Behrouz Touri for his help about Theorem \ref{thm:DT_result}.
\vspace{-0.2cm}

\bibliographystyle{IEEEtran}
%\bibliography{bibtex}

%\bibliographystyle{plain}
\bibliography{references.bib}

\end{document}